\newcommand{\area}[1]{A(#1)}
\newcommand{\dg}{G(n,r)}
\newcommand{\idg}[1]{DG[#1]}
\newcommand{\Ex}{\mathrm{E}}
\newcommand{\Var}{\mathrm{Var}}
\newcommand{\Cov}{\mathrm{Cov}}
\newcommand\given[1][]{\:#1\vert\:}
\date{\today}
\newtheorem{lemma}{Lemma}
\newtheorem{corollary}{Corollary}
\newtheorem{theorem}{Theorem}
\newtheorem{observation}{Observation}
\newtheorem*{problem*}{Problem}
\title{Plane and Planarity Thresholds for Random Geometric Graphs\thanks{A preliminary version of this paper appeared in ALGOSENSORS 2015.}}
\author{Ahmad Biniaz\thanks{Cheriton School of Computer Science, University of Waterloo, 
		Waterloo, Canada. This work has been done wile the author was a PhD student at Carleton University.}
\and Evangelos Kranakis\thanks{School of Computer Science, Carleton University, 
	Ottawa, Canada.}
\and Anil Maheshwari\footnotemark[2]
\and Michiel Smid\footnotemark[2]
}
\begin{document}
\maketitle

\begin{abstract}
A random geometric graph, $\dg$, is formed by choosing $n$ points independently and uniformly at random in a unit square; two
points are connected by a straight-line edge if they are at Euclidean distance at most $r$.
For a given constant $k$, we show that $n^{\frac{-k}{2k-2}}$ is a distance threshold function for $\dg$ to have a connected subgraph on $k$ points. Based on this, we show that $n^{-2/3}$ is a distance threshold for $\dg$ to be plane, and $n^{-5/8}$ is a distance threshold to be planar. We also investigate distance thresholds for $\dg$ to have a non-crossing edge, a clique of a given size, and an independent set of a given size.
\end{abstract}

\section{Introduction}
Wireless networks are usually modeled as disk graphs in the plane. Given a set $P$ of points in the plane and a positive parameter $r$, the {\em disk graph} is the geometric graph with vertex set $P$ which has a straight-line edge between two points $p,q\in P$ if and only if $|pq|\le r$, where $|pq|$ denotes the Euclidean distance between $p$ and $q$. If $r=1$, then the disk graph is referred to as {\em unit disk graph}.  
A {\em random geometric graph}, denoted by $\dg$, is a geometric graph formed by choosing $n$ points independently and uniformly at random in a unit square; two points are connected by a straight-line edge if and only if they are at Euclidean distance at most $r$, where $r=r(n)$ is a function of $n$ and $r \to 0$ as $n\to \infty$.

We say that two line segments in the plane {\em cross} each other if they have a point in common that is interior to both edges. Two line segments are {\em non-crossing} if they do not cross. Note that two non-crossing line segments may share an endpoint. A geometric graph is said to be {\em plane} if its edges do not cross, and {\em non-plane}, otherwise. An edge in a geometric graph is said to be {\em free} (or {\em non-crossing}) if its interior is not intersected by any other edge of the graph. A graph is {\em planar} if and only if it does not contain $K_5$ (the complete graph on 5 vertices) or $K_{3,3}$ (the complete bipartite graph on six vertices partitioned into two parts each of size $3$) as a minor. A {\em non-planar graph} is a graph which is not planar. A {\em clique} in a graph is a subset of vertices of the graph such that every two of them are adjacent. An {\em independent set} in a graph is a subset of vertices of the graph such that none of them are adjacent.

A graph property $\mathcal{P}$ is {\em increasing} if a graph $G$ satisfies $\mathcal{P}$, then by adding edges to $G$, the property $\mathcal{P}$ remains valid in $G$. Similarly, $\mathcal{P}$ is {\em decreasing} if a graph $G$ satisfies $\mathcal{P}$, then by removing edges from $G$, the property $\mathcal{P}$ remains valid in $G$. $\mathcal{P}$ is called a {\em monotone} property if $\mathcal{P}$ is either increasing or decreasing. Connectivity and ``having a clique of size $k$'' are increasing monotone properties, while planarity, ``being plane'', and ``having an independent set of size $k$'' are decreasing monotone properties. In this paper we will see that ``having a free edge'' is not a monotone property in $\dg$.

By~\cite{friedgut1996every, goel2004sharp} any monotone property of random geometric graphs has a {\em sharp threshold} function (see \cite{Bradonjic2014} for a definition). The thresholds in random geometric graphs are expressed by the distance $r$. In the sequel, the term w.h.p. (with high probability) is to be
interpreted to mean that the probability tends to
$1$ as $n \to \infty$. For an increasing property $\mathcal{P}$, the {\em threshold} is a function $t(n)$ such that if $r=o(t(n))$ then w.h.p. $\mathcal{P}$ does not hold in $\dg$, and if $r=\omega(t(n))$ then w.h.p. $\mathcal{P}$ holds in $\dg$. Symmetrically, for a decreasing property $\mathcal{P}$, the {\em threshold} is a function $t(n)$ such that if $r=o(t(n))$ then w.h.p. $\mathcal{P}$ holds in $\dg$, and if $r=\omega(t(n))$ then w.h.p. $\mathcal{P}$ does not hold in $\dg$. Note that a threshold function may not be unique. It is well known that $\sqrt{\ln n/n}$ is a connectivity threshold for $\dg$; see~\cite{Gupta1998,Panchapakesan2001,penrose1997longest}. In this paper we investigate (not necessarily sharp) thresholds in random geometric graphs for having a connected subgraph of constant size, being plane, and being planar.

\subsection{Related Work}

Random graphs were first defined and formally
studied by Gilbert in \cite{gilbert1959random}
and Erd\"{o}s and R\'{e}nyi \cite{erd6s1960evolution}.
It seems that the concept of a
random geometric graph was first formally suggested by Gilbert
in \cite{gilbert1961random}
and for that reason is also known as Gilbert's disk model. These
classes of graphs are known to have numerous applications
as a model for studying communication
primitives (broadcasting, routing, etc.) and
topology control (connectivity, coverage, etc.) in idealized wireless
sensor networks. They have been extensively studied in theoretical computer science and mathematical sciences over last few decades.

An instance of Erd\"{o}s-R\'{e}nyi graph~\cite{erd6s1960evolution} is obtained by taking $n$ vertices and connecting any two with probability $p$, independently of all other pairs; the graph derived by this scheme is denoted by $G_{n,p}$. In $G_{n,p}$ the threshold is expressed by the edge existence probability $p$, while in $\dg$ the threshold is expressed in terms of $r$. In both random graphs and random geometric graphs, property thresholds are of great interest~\cite{bollobas2001random,Bradonjic2014,friedgut1996every,goel2004sharp,Mccolm2001}.
Note that edge crossing configurations in $\dg$ have a geometric nature, and as such, have no analogues in the context of the Erd\"{o}s-R\'{e}nyi model for random graphs. However, planarity, and having a clique or an independent set of specific size are of interest in both $G_{n,p}$ and $\dg$.

Bollob\'{a}s and Thomason~\cite{Bollobas1987} showed that
any monotone property in random graphs has a threshold function. See also a result of Friedgut and Kalai~\cite{friedgut1996every}, and a result of Bourgain and Kalai~\cite{Bourgain1998}. In the Erd\"{o}s-R\'{e}nyi random graph
$G_{n,p}$, the connectivity threshold is $p = \log n/n$ and the threshold for having a giant component is $p= 1/n$; see~\cite{alon2004probabilistic}. The planarity threshold for $G_{n,p}$ is $p =1/n$, and its threshold for having a clique of size $k$ is $p=n^{-2/(k-1)}$; see~\cite{bollobas2001random,spencer1987ten}. 

A general reference on random geometric graphs is
\cite{penrose2003random}. There is extensive literature on various aspects of random geometric graphs
of which we mention the related work on coverage by
\cite{hall1985coverage,janson1986random}
and a review on percolation, connectivity, coverage and coloring by~\cite{balister2008percolation}.
As in random graphs, any monotone property in random geometric graphs has a threshold function~\cite{Bradonjic2014,goel2004sharp,Krishnamachari2002,Mccolm2001}. 

Random geometric graphs have a connectivity threshold of $\sqrt{\ln n/n}$; see~\cite{Gupta1998,Panchapakesan2001,penrose1997longest}.
Gupta and Kumar~\cite{Gupta1998} provided a connectivity threshold for points that are uniformly distributed in a disk. 
By a result of Penrose~\cite{Penrose1999}, in $\dg$, any threshold function for having no isolated vertex (a vertex of degree zero) is also a connectivity threshold function. Panchapakesan and Manjunath~\cite{Panchapakesan2001} showed that $\sqrt{\ln n/n}$ is a threshold for being an isolated vertex in $\dg$. This implies that $\sqrt{\ln n/n}$ is a connectivity threshold for $\dg$.
For $k\ge 2$, the details on the $k$-connectivity
threshold
in random geometric graphs can be found in~\cite{Penrose1999,penrose2003random}. Connectivity of random geometric graphs for points on a line is studied by Godehardt and Jaworski~\cite{Godehardt1996}.
Appel and Russo~\cite{Appel2002} considered the connectivity under the $L_\infty$-norm. 

\subsection{Our Results}
In this paper we investigate thresholds for some monotone properties in random geometric graphs. 
In Section~\ref{connected-section} we show that for a constant $k$, the distance threshold for having a connected subgraph on $k$ points is $n^{\frac{-k}{2k-2}}$. We show that the same threshold is valid for the existence of a clique of size $k$. Based on that, we prove the following thresholds for a random geometric graph to be plane or planar. In Section~\ref{plane-section}, we prove that $n^{-2/3}$ is a distance threshold for a random geometric graph to be plane. In Section~\ref{planar-section}, we prove that $n^{-5/8}$ is a distance threshold for a random geometric graph to be planar. In Section~\ref{free-edge-section} we investigate the existence of free edges in random geometric graphs. In Section~\ref{is-section}, we investigate thresholds for having an independent set of size $k$.

\section{The threshold for having a connected subgraph on $k$ points} 
\label{connected-section}
In this section, we look for the distance threshold for ``existence of connected subgraphs of constant size''; this is an increasing property. For a given constant $k$, we show that $n^{\frac{-k}{2k-2}}$ is the threshold function for the existence of a connected subgraph on $k$ points in $\dg$. Specifically, we show that if $r=o(n^{\frac{-k}{2k-2}})$, then w.h.p. $\dg$ has no connected subgraph on $k$ points, and if $r=\omega(n^{\frac{-k}{2k-2}})$, then w.h.p. $\dg$ has a connected subgraph on $k$ points. We also show that the same threshold function holds for the existence of a clique of size $k$.

\begin{theorem}
\label{connected-k-thr}
Let $k\ge 2$ be an integer constant. Then, $n^{\frac{-k}{2k-2}}$ is a distance threshold function for $\dg$ to have a connected subgraph on $k$ points.
\end{theorem}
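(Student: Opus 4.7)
The plan is to split at the proposed threshold $r^\ast = n^{-k/(2k-2)}$ and use a first-moment (Markov) argument in the regime $r = o(r^\ast)$ together with a second-moment (Chebyshev) argument via a grid partition in the regime $r = \omega(r^\ast)$.

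For the sparse regime, let $X$ count the $k$-subsets of the $n$ random points whose induced subgraph in $\dg$ is connected. The key geometric observation is that if $k$ points span a connected subgraph, then by taking any spanning tree of that subgraph, all $k$ points lie within distance $(k-1)r$ of any fixed one of them. Fixing one point and bounding the probability that the remaining $k-1$ points land in the corresponding disk,
\[
\Ex[X] \le \binom{n}{k}\bigl(\pi (k-1)^2 r^2\bigr)^{k-1} = O\!\left(n^k r^{2k-2}\right),
\]
so substituting $r = o(n^{-k/(2k-2)})$ gives $\Ex[X] = o(1)$, and Markov's inequality yields $\Pr[X \ge 1] \to 0$.

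For the dense regime, I would tile the unit square with axis-aligned cells of side $r/\sqrt{2}$, so that each cell has diameter at most $r$ and any $k$ points sharing a cell form a $k$-clique (in particular, a connected $k$-subgraph). Since the property is monotone increasing, I may replace $r$ by any smaller function still of order $\omega(r^\ast)$, and in particular assume $nr^2 \to 0$. Letting $N_i$ denote the number of sample points in cell $i$ and $Y = \sum_i \mathbf{1}\{N_i \ge k\}$, a Binomial tail estimate gives
\[
\Ex[Y] = \Theta(1/r^2)\cdot \Theta\!\bigl((nr^2)^k\bigr) = \Theta\!\left(n^k r^{2k-2}\right) \to \infty.
\]
Since the vector $(N_i)$ is multinomial, its coordinates are negatively associated, and the property is preserved under the coordinatewise monotone map $N_i \mapsto \mathbf{1}\{N_i \ge k\}$; hence $\Cov(\mathbf{1}\{N_i \ge k\},\mathbf{1}\{N_j \ge k\}) \le 0$ for $i\ne j$ and $\Var(Y) \le \Ex[Y]$. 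Chebyshev's inequality then gives $\Pr[Y = 0] \le 1/\Ex[Y] \to 0$, and on the complementary event some cell contains $k$ points forming a connected (in fact complete) subgraph.

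The step I expect to require most care is the variance bound; if one wishes to avoid invoking negative association, an elementary substitute is to expand $\Ex[Y^2]$ directly from the multinomial distribution and verify $\Ex[Y^2]/\Ex[Y]^2 \to 1$ by a routine calculation. The partial cells along the boundary of the unit square affect only the constants, since they form a $\Theta(r)$-fraction of the grid. As a free byproduct, the lower-bound construction in fact yields a $k$-clique, so the same threshold $n^{-k/(2k-2)}$ governs the existence of $k$-cliques, matching the companion statement promised in Section~\ref{connected-section}.
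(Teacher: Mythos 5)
Your argument is correct, and the second-moment half is a genuinely different route from the paper's. The first-moment step is in essence the same: both arguments localize a connected $k$-subset to a region of area $O(r^2)$ (you via the spanning-tree/disk observation, the paper via a $kr\times kr$ bounding box around a grid cell), giving $\Ex[X]=O(n^kr^{2k-2})$ and then applying Markov. The divergence is in the dense regime. The paper works with the original count $X$ of connected $k$-subsets and bounds $\Var(X)$ by hand: it decomposes the covariance sum according to the overlap size $w=|P_i\cap P_j|$, uses a geometric bounding-box argument to estimate $\Ex[X_iX_j]=O(r^{4k-2w-2})$ in each case, counts the pairs with overlap $w$, and verifies that every resulting term of $\Var(X)/\Ex[X]^2$ is $o(1)$. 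You sidestep all of that case analysis by switching to the auxiliary count $Y$ of cells that receive at least $k$ of the $n$ points; the multinomial cell counts are negatively associated, and the coordinatewise non-decreasing map $N_i\mapsto\mathbf 1\{N_i\ge k\}$ preserves that, so $\Var(Y)\le \Ex[Y]$ falls out without any bookkeeping, and Chebyshev finishes. Your monotonicity reduction to the sub-regime $nr^2\to 0$ (legal since $n^{-k/(2k-2)}=o(n^{-1/2})$ for $k\ge 2$) is a clean device that keeps the binomial tail estimate $\Pr[N_i\ge k]=\Theta((nr^2)^k)$ exact to constants. The trade-off is that your variance bound leans on the negative-association theorem for multinomials (Joag-Dev--Proschan), which the paper avoids by staying fully elementary; in exchange, your calculation is shorter, avoids conditional geometric estimates entirely, and, as you note, delivers the $k$-clique threshold of Theorem~\ref{clique-k-thr} simultaneously since the witnesses it produces are $k$ points inside a common cell of diameter at most $r$.
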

\begin{proof}
Let $P_1, \dots, P_{n\choose k}$ be an enumeration of all subsets of $k$ points in $\dg$. Let $\idg{P_i}$ be the subgraph of $\dg$ that is induced by $P_i$. Let $X_i$ be the random variable such that

\[ X_i =
  \begin{cases}
    1       & \quad \text{if } \idg{P_i}\text{ is connected,}\\
    0     & \quad \text{otherwise.}\\
  \end{cases}
\]
Let the random variable $X$ count the number of sets $P_i$ for which $\idg{P_i}$ is connected. It is clear that

\begin{equation}
 \label{X-Xi}
X=\sum_{i=1}^{n\choose k}X_i.
\end{equation}
Observe that $\Ex[X_i]=\Pr[X_i=1]$. Since the random variables $X_i$ have identical distributions, we have
\begin{equation}
 \label{E-X-Xi}
\Ex[X]={n \choose k}\Ex[X_1].
\end{equation}

We obtain an upper bound and a lower bound for $\Pr[X_i=1]$. First, partition the unit square into squares of side equal to $r$. Let $\{s_1, \dots, s_{1/r^2}\}$ be the resulting set of squares. For a square $s_t$, let $S_t$ be the $kr\times kr$ square which has $s_t$ on its left bottom corner; see Figure~\ref{sS-fig}(a). $S_t$ contains at most $k^2$ squares each of side length $r$ ($S_t$ may be on the boundary of the unit square). Let $A_{i,t}$ be the event that all points in $P_i$ are contained in $S_t$. Observe that if $\idg{P_i}$ is connected then $P_i$ lies in $S_t$ for some $t\in\{1,\dots, 1/{r^2}\}$. Therefore, 

$$\mbox{if }\idg{P_i}\mbox{ is connected, then } (A_{i,1} \lor A_{i,2}\lor\dots\vee A_{i,1/{r^2}}),$$ and hence we have
\begin{equation}
 \label{Xi-upper}
\Pr[X_i=1]\le \sum_{t=1}^{1/{r^2}} \Pr[A_{i,t}]\le\sum_{t=1}^{1/{r^2}} (k^2r^2)^k=k^{2k} r^{2k-2}.
\end{equation}

Now, partition the unit square into squares with diagonal length equal to $r$. Each such square has side length equal to $r/\sqrt{2}$. Let $\{s_1, \dots, s_{2/r^2}\}$ be the resulting set of squares. Let $B_{i,t}$ be the event that all points of $P_i$ are in $s_t$. Observe that if all points of $P_i$ are in the same square, then $\idg{P_i}$ is a complete graph and hence connected. Therefore,
$$\mbox{if }(B_{i,1} \lor B_{i,2}\lor\dots\vee B_{i,2/{r^2}})\mbox{, then } \idg{P_i}\mbox{ is connected},$$ and hence we have
\begin{equation}
 \label{Xi-lower}
\Pr[X_i=1]\ge \sum_{t=1}^{2/{r^2}} \Pr[B_{i,t}]=\sum_{t=1}^{2/{r^2}} \left(\frac{r^2}{2}\right)^k=\frac{1}{2^{k-1}} r^{2k-2}.
\end{equation}

Since $k\ge 2$ is a constant, Inequalities~\eqref{Xi-upper} and~\eqref{Xi-lower} and Equation~\eqref{E-X-Xi} imply that 
\begin{align}
\Ex[X_i]&= \label{E-Xi} \Theta(r^{2k-2}),\\
\Ex[X] &=\label{E-X}\Theta(n^kr^{2k-2}).
\end{align}
If $n\to \infty$ and $r=o(n^{\frac{-k}{2k-2}})$ we conclude that the following inequalities are valid
\begin{align}
\Pr [X\ge 1] &\leq \notag \Ex[X]  \mbox{ (by Markov's Inequality)}\\
&= \notag \Theta(n^k r^{2k-2}) \mbox{ (by~\eqref{E-X})} \\
&= \label{1mom2:eq} o(1).
\end{align}
Therefore, w.h.p. $\dg$ has no connected subgraph on $k$ points.

\begin{figure}[htb]
 \centering
\setlength{\tabcolsep}{0in}
  $\begin{tabular}{cc}
\multicolumn{1}{m{.5\columnwidth}}{\centering\includegraphics[width=.24\columnwidth]{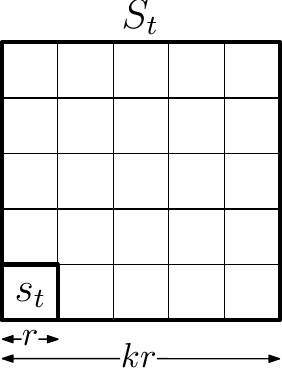}}
&\multicolumn{1}{m{.5\columnwidth}}{\centering\includegraphics[width=.24\columnwidth]{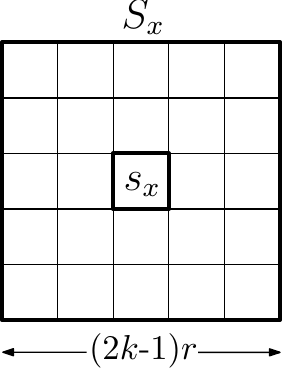}}
\\(a) & (b)
\end{tabular}$
  \caption{(a) The square $S_t$ has $s_t$ on its left bottom corner. (b) The square $S_x$ which is centered at $s_x$.}
\label{sS-fig}
\end{figure}

In the rest of the proof, we assume that $r=\omega(n^{\frac{-k}{2k-2}})$. In order to show that w.h.p. $\dg$ has at least one connected subgraph on $k$ vertices, we show, using the second moment method~\cite{alon2004probabilistic}, that $\Pr[X=0]\to 0$ as $n\to\infty$. 
Recall from Chebyshev's inequality that 
\begin{equation}
\label{Chebyshev}
\Pr [X=0] \leq 
\frac{\Var (X)}{\Ex[X]^2} .
\end{equation}
Therefore, in order to show that $\Pr [X=0] \to 0$, it suffices to show that
\begin{equation}
\label{Var-Ex2}
\frac{\Var (X)}{\Ex[X]^2}\to 0.
\end{equation}
In view of Identity~\eqref{X-Xi} we have
\begin{equation}
\label{Var-X}
\Var (X) = \sum_{1\le i,j\le {n\choose k}} \Cov (X_i, X_j),
\end{equation}
where $\Cov(X_i,X_j)=\Ex[X_iX_j]-\Ex[X_i]\Ex[X_j]\le \Ex[X_iX_j]$. 
If $|P_i\cap P_j|=0$ then $\idg{P_i}$ and $\idg{P_j}$ are disjoint. Thus, the random variables $X_i$ and $X_j$ are independent, and hence $\Cov(X_i,X_j)=0$. 
It is enough to consider the
cases when $P_i$ and $P_j$ are not disjoint. Assume $|P_i\cap P_j|=w$, where $w\in\{1,\dots,k\}$. Thus, in view of Equation~\eqref{Var-X}, we have 

\begin{align}
\Var (X) &=\notag \sum_{w=1}^k\sum_{|P_i\cap P_j|=w} \Cov (X_i, X_j)\\
& \le \label{Var-X-2} \sum_{w=1}^k\sum_{|P_i\cap P_j|=w} \Ex[X_iX_j].
\end{align}

The computation of $\Ex[X_i,X_j]$ involves some geometric considerations which are being discussed in detail below. Since $X_i$ and $X_j$ are 0-1 random variables, $X_iX_j$ is a 0-1 random variable and 

\[ X_iX_j =
  \begin{cases}
    1       & \quad \text{if both } \idg{P_i}\text{ and }\idg{P_j}\text{ are connected,}\\
    0     & \quad \text{otherwise.}\\
  \end{cases}
\]

By the definition of the expected value we have 
\begin{align}
\Ex[X_iX_j]&=\notag\Pr[X_j=1|X_i=1]\Pr[X_i=1]\\
&=\label{E-XiXj}\Pr[X_j=1|X_i=1]\Ex[X_i].
\end{align}

By~\eqref{E-Xi}, $\Ex[X_i]= \Theta(r^{2k-2})$. It remains to compute $\Pr[X_j=1|X_i=1]$, i.e., the probability that $\idg{P_j}$ is connected given that $\idg{P_i}$ is connected.
Consider the $k$-tuples $P_i$ and $P_j$ under the condition that $\idg{P_i}$ is connected. Let $x$ be a point in $P_i\cap P_j$.
Partition the unit square into squares of side length equal to $r$. Let $s_x$ be the square containing $x$. Let $S_x$ be the $(2k-1)r\times (2k-1)r$ square centered at $s_x$. $S_x$ contains at most $(2k-1)^2$ squares each of side length $r$ (if $S_x$ is on the boundary of the unit square then it may contain less than $(2k-1)^2$ squares); see Figure~\ref{sS-fig}(b). The area of $S_x$ is at most $(2kr)^2$, and hence the probability that a specific point of $P_j$ is in $S_t$ is at most $4k^2r^2$. 
Since $P_i$ and $P_j$ share $w$ points, in order for $\idg{P_j}$ to be connected, the remaining $k-w$ points of $P_j$ must lie in $S_x$. Thus, the probability that $\idg{P_j}$ is connected given that $\idg{P_i}$ is connected is at most $(4k^2r^2)^{k-w}\le c_wr^{2k-2w}$, for some constant $c_w>0$. Thus, $\Pr[X_j=1|X_i=1]\le c_wr^{2k-2w}$. In view of Equation~\eqref{E-XiXj}, we have

\begin{equation}
\label{E-XiXj-2}
 \Ex[X_iX_j]\le c'_w\cdot r^{2k-2w}\cdot r^{2k-2}= c'_wr^{4k-2w-2},
\end{equation}
for some constant $c'_w>0$.

Since $P_i$ and $P_j$ are $k$-tuples that share $w$ points, $|P_i\cup P_j|=2k-w$. There are ${n \choose {2k-w}}$ ways to choose $2k-w$ points for $P_i\cup P_j$. Since we choose $w$ points for $P_i\cap P_j$, $k-w$ points for $P_i$ alone, and $k-w$ points for $P_j$ alone, there are ${{2k-w}\choose {w, k-w, k-w}}$ ways to split the $2k-w$ chosen points into $P_i$ and $P_j$. Based on this and Inequality~\eqref{E-XiXj-2}, Inequality~\eqref{Var-X-2} turns out to
 
\begin{align}
\Var (X) & \le \notag \sum_{w=1}^k\sum_{|P_i\cap P_j|=w} \Ex[X_iX_j]\\
&\le \notag \sum_{w=1}^{k} {n \choose {2k-w}}{{2k-w}\choose {w, k-w, k-w}} c'_wr^{4k-2w-2}\\
&\le\notag \sum_{w=1}^{k} c''_w{n^{2k-w}}r^{4k-2w-2}.
\end{align}

for some constants $c''_w>0$. Consider~\eqref{Var-Ex2} and note that by~\eqref{E-X}, $\Ex[X]^2\ge c''n^{2k}r^{4k-4}$, for some constant $c''>0$. Thus,

\begin{align}
\frac{\Var (X)}{\Ex[X]^2} & \le \notag\sum_{w=1}^{k} \frac{c''_w{n^{2k-w}}r^{4k-2w-2}}{c''n^{2k}r^{4k-4}}
 = \notag\sum_{w=1}^{k} \frac{c''_w}{c''}\cdot\frac{1}{n^{w}r^{2w-2}}\\
& = \label{Var-Ex2-2}\frac{c''_1}{c''}\cdot\frac{1}{n^{1}r^{0}}+
\frac{c''_2}{c''}\cdot\frac{1}{n^{2}r^{2}}+
\dots+
\frac{c''_k}{c''}\cdot\frac{1}{n^{k}r^{2k-2}}
\end{align}

Since $r=\omega(n^{\frac{-k}{2k-2}})$, all terms in~\eqref{Var-Ex2-2} tend to zero. This proves the convergence in~\eqref{Var-Ex2}. Thus, $\Pr[X=0]\to 0$ as $n\to \infty$. This implies that if $r=\omega(n^{\frac{-k}{2k-2}})$, then $\dg$ has a connected subgraph on $k$ vertices with high probability.
\end{proof}

In the following theorem we show that if $k=O(1)$, then $n^{\frac{-k}{2k-2}}$ is also a threshold for $\dg$ to have a clique of size $k$; this is an increasing property.

\begin{theorem}
\label{clique-k-thr}
Let $k\ge 2$ be an integer constant. Then, $n^{\frac{-k}{2k-2}}$ is a distance threshold function for $\dg$ to have a clique of size $k$.
\end{theorem}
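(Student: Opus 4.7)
The approach is to derive the clique threshold directly from Theorem~\ref{connected-k-thr} for the lower tail, and to run essentially the same second moment computation for the upper tail, observing that cliques sit comfortably between the two sandwich bounds already established.

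For the lower tail, suppose $r=o(n^{-k/(2k-2)})$. Since every clique on $k$ vertices is a connected subgraph on $k$ vertices, the existence of a $k$-clique in $\dg$ implies the existence of a connected subgraph on $k$ points. By Theorem~\ref{connected-k-thr}, the latter fails w.h.p., so the former fails w.h.p. as well.

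For the upper tail, assume $r=\omega(n^{-k/(2k-2)})$. Let $P_1,\dots,P_{n\choose k}$ enumerate the $k$-subsets of the vertex set of $\dg$, let $Y_i$ be the indicator that $\idg{P_i}$ is a clique, and set $Y=\sum_i Y_i$. The first task is to show $\Ex[Y_i]=\Theta(r^{2k-2})$. The lower bound is inherited verbatim from Inequality~\eqref{Xi-lower}: if all $k$ points of $P_i$ fall in a common square of diagonal $r$, then all pairwise distances are at most $r$, so $\idg{P_i}$ is not merely connected but in fact a clique; hence $\Pr[Y_i=1]\ge \frac{1}{2^{k-1}}r^{2k-2}$. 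For the upper bound, condition on the location of an arbitrary point $p\in P_i$; for $\idg{P_i}$ to be a clique, each of the remaining $k-1$ points must lie within distance $r$ of $p$, i.e., in a disk of area at most $\pi r^2$. Integrating over the position of $p$ gives $\Pr[Y_i=1]\le \pi^{k-1} r^{2k-2}$. Consequently $\Ex[Y]=\Theta(n^k r^{2k-2})$.

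To apply the second moment method it remains to bound $\Var(Y)=\sum_{i,j}\Cov(Y_i,Y_j)$. As before, only pairs with $w=|P_i\cap P_j|\ge 1$ contribute. Condition on $\idg{P_i}$ being a clique and fix any point $x\in P_i\cap P_j$. If $\idg{P_j}$ is also a clique, then every point of $P_j\setminus P_i$ lies within distance $r$ of $x$, hence in a disk of area $\pi r^2$; the probability that all $k-w$ such points satisfy this constraint is at most $(\pi r^2)^{k-w}$. Therefore $\Pr[Y_j=1\mid Y_i=1]\le \pi^{k-w}r^{2k-2w}$, and
\begin{equation*}
\Ex[Y_iY_j]=\Pr[Y_j=1\mid Y_i=1]\,\Ex[Y_i]\le c'_w\, r^{4k-2w-2}
\end{equation*}
for a constant $c'_w>0$. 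Counting the number of ordered pairs $(P_i,P_j)$ with $|P_i\cap P_j|=w$ as $\binom{n}{2k-w}\binom{2k-w}{w,k-w,k-w}=\Theta(n^{2k-w})$, and dividing by $\Ex[Y]^2=\Theta(n^{2k}r^{4k-4})$, gives
\begin{equation*}
\frac{\Var(Y)}{\Ex[Y]^2}\le \sum_{w=1}^{k}\frac{C_w}{n^{w}r^{2w-2}},
\end{equation*}
for constants $C_w>0$. Each term tends to $0$ since $r=\omega(n^{-k/(2k-2)})$ (these are exactly the terms that vanished in~\eqref{Var-Ex2-2}). By Chebyshev's inequality, $\Pr[Y=0]\to 0$, so $\dg$ contains a clique of size $k$ w.h.p.

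The main obstacle, such as it is, is verifying that $\Ex[Y_i]=\Theta(r^{2k-2})$ matches the connected case; this is where the argument could fail if cliques were substantially rarer than connected subgraphs. The observation saving us is that the lower bound in~\eqref{Xi-lower} already produces cliques (all $k$ points in one small square), so the constants differ from the connected case but the order of magnitude is identical, letting the rest of the second moment machinery go through unchanged.
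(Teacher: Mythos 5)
Your proof is correct and follows essentially the same approach as the paper: the lower tail is inherited directly from Theorem~\ref{connected-k-thr}, and the upper tail reruns the same second-moment computation after establishing $\Ex[Y_i]=\Theta(r^{2k-2})$ and $\Ex[Y_iY_j]=O(r^{4k-2w-2})$. The only cosmetic difference is that you condition on a single anchor point and use disk areas ($\pi r^2$) where the paper covers with $2r\times 2r$ and $3r\times 3r$ grid boxes; both give the same orders and the argument goes through identically.
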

\begin{proof}
By Theorem~\ref{connected-k-thr}, if $r=o(n^{\frac{-k}{2k-2}})$, then w.h.p. $\dg$ has no connected subgraph on $k$ vertices, and hence it has no clique of size $k$. This proves the first statement. We prove the second statement by adjusting the proof of Theorem~\ref{connected-k-thr}, which is based on the second moment method. Assume
$r=\omega(n^{\frac{-k}{2k-2}})$. Let $P_1,\dots, P_{n\choose k}$ be an enumeration of all subsets of $k$ points. Let $X_i$ be equal to 1 if $\idg{P_i}$ is a clique, and 0 otherwise. Let $X=\sum X_i$. 

Partition the unit square into a set $\{s_1,\dots,s_{1/{r^2}}\}$ of squares of side length $r$. Let $S_t$ be the $2r\times 2r$ square which has $s_t$ on its left bottom corner. If $\idg{P_i}$ is a clique then $P_i$ lies in $S_t$ for some $t\in\{1,\dots, 1/{r^2}\}$. Therefore, 
$$
\Pr[X_i=1]\le 4^{k} r^{2k-2}.
$$
Now, partition the unit square into a set $\{s_1,\dots,s_{2/{r^2}}\}$ of squares with diagonal length $r$. If all points of $P_i$ fall in the square $s_t$, then $\idg{P_i}$ is a clique. Thus,
$$
\Pr[X_i=1]\ge \frac{1}{2^{k-1}} r^{2k-2}.
$$
Since $k\ge 2$ is a constant, we have
\begin{align}
\Ex[X_i]&=\notag \Theta(r^{2k-2}),\\
\Ex[X] &=\notag \Theta(n^kr^{2k-2}).
\end{align}

In view of Chebyshev's inequality we need to show that $\frac{\Var (X)}{\Ex[X]^2}$ tends to 0 as $n$ goes to infinity. We bound $\Var (X)$ from above by Inequality~\eqref{Var-X-2}.
Consider the $k$-tuples $P_i$ and $P_j$ under the condition that $\idg{P_i}$ is a clique. Let $|P_i\cap P_j|=w$, and let $x$ be a point in $P_i\cap P_j$.
Partition the unit square into squares of side length $r$. Let $s_x$ be the square containing $x$. Let $S_x$ be the $3r\times 3r$ square centered at $s_x$. In order for $\idg{P_j}$ to be a clique, the remaining $k-w$ points of $P_j$ must lie in $S_x$. Thus, 
$$
 \Ex[X_iX_j]\le c'_wr^{4k-2w-2},
$$
for some constant $c'_w>0$. By a similar argument as in the proof of Theorem~\ref{connected-k-thr}, we can show that for some constants $c'', c''_w>0$ the followings inequalities are valid: 
\begin{align}
\Var (X) &\le\notag \sum_{w=1}^{k} c''_w{n^{2k-w}}r^{4k-2w-2},\\
\frac{\Var (X)}{\Ex[X]^2} &\le\notag \sum_{w=1}^{k} \frac{c''_w}{c''}\cdot\frac{1}{n^{w}r^{2w-2}}.
\end{align}
Since $r=\omega(n^{\frac{-k}{2k-2}})$, the last inequality tends to 0 as $n$ goes to infinity. This completes the proof for the second statement.
\end{proof}
As a direct consequence of Theorem~\ref{clique-k-thr}, we have the following corollary.

\begin{corollary}
\label{clique-cor}
 $n^{-1}$ is a threshold for $\dg$ to have an edge, and $n^{-\frac{3}{4}}$ is a threshold for $\dg$ to have a triangle.
\end{corollary}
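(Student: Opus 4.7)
The plan is to simply instantiate Theorem~\ref{clique-k-thr} at $k=2$ and $k=3$ and observe that a clique of size $2$ is precisely an edge, while a clique of size $3$ is precisely a triangle. No new probabilistic machinery is required; the work has already been done in the preceding theorem.

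Concretely, I would substitute $k=2$ into the threshold expression $n^{\frac{-k}{2k-2}}$ to get $n^{\frac{-2}{2}} = n^{-1}$, yielding the claim that $n^{-1}$ is a distance threshold for $\dg$ to contain a $K_2$, i.e., an edge. Then, substituting $k=3$ gives $n^{\frac{-3}{4}}$, which is the distance threshold for $\dg$ to contain a $K_3$, i.e., a triangle. In each case, Theorem~\ref{clique-k-thr} guarantees both directions: if $r=o(t(n))$ then w.h.p.\ the property fails, and if $r=\omega(t(n))$ then w.h.p.\ the property holds.

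There is essentially no obstacle here; the only thing to verify is that the constants $k=2$ and $k=3$ satisfy the hypothesis $k\ge 2$ of Theorem~\ref{clique-k-thr}, which they obviously do. So the entire proof amounts to a one-line arithmetic check for each of the two assertions, followed by the observation identifying $K_2$ with an edge and $K_3$ with a triangle.
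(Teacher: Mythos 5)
Your proof is correct and is exactly the paper's own argument: the paper states Corollary~\ref{clique-cor} as a direct consequence of Theorem~\ref{clique-k-thr}, obtained by substituting $k=2$ and $k=3$ into the threshold $n^{\frac{-k}{2k-2}}$. Nothing more is needed.
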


\section{The threshold for $\dg$ to be plane}
\label{plane-section}

In this section we investigate the 
threshold for a random geometric graph to be plane; this is a decreasing property. Recall that $\dg$ is plane if no two of its edges cross.
As a warm-up exercise we first prove a simple result which is based on the connectivity threshold for random geometric graphs, which is known to be $\sqrt{\ln n/n}$.\\
\begin{wrapfigure}{r}{0.3\textwidth}
\vspace{-15pt}
  \centering
\includegraphics[width=.25\textwidth]{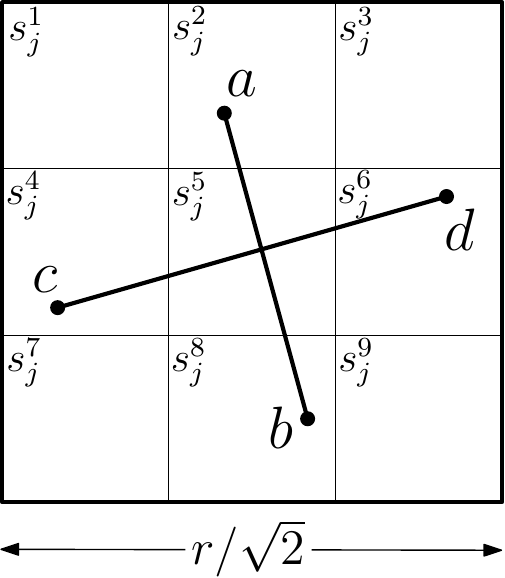}
  \vspace{-5pt}
\caption{An square of diameter $r$ which is partitioned into nine sub-squares.}
\vspace{-15pt}
\label{fig:square}
\end{wrapfigure}
 \vspace{-15pt}
\begin{theorem}
\label{thm3}
If $r \geq \sqrt{\frac{c \ln n}{n}}$, with $c \geq 36$, then w.h.p. $\dg$ is not plane.
\end{theorem}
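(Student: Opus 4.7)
\smallskip
\noindent\textbf{Proof proposal.} The plan is to show that when $r$ is well above the connectivity threshold, the graph is so dense locally that a forced crossing appears. Specifically, I will partition the unit square into disjoint squares of \emph{diameter} $r$ (side length $r/\sqrt 2$), and then subdivide each of these into a $3\times 3$ grid of $9$ sub-squares, each of side $r/(3\sqrt 2)$ and area $r^2/18$. There are $2/r^2$ big squares and $18/r^2$ sub-squares in total. If I can show that w.h.p. \emph{every} sub-square receives at least one of the $n$ random points, then every big square contains at least $9$ points that are pairwise within distance $r$ (its diameter), so $\dg$ contains a $K_9$ on those points; picking one point from each of the four corner sub-squares of one big square yields four points in convex position (the points in the left column have $x$-coordinate strictly smaller than those in the right column, and analogously for $y$), so the two diagonals of the resulting convex quadrilateral are edges of $\dg$ that cross. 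Hence $\dg$ is not plane.

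The only nontrivial computation is the union bound on emptiness of sub-squares. The probability that a fixed sub-square contains no point is
\[
\left(1-\frac{r^2}{18}\right)^n \le e^{-nr^2/18}.
\]
Using the hypothesis $r^2 \ge 36\ln n/n$, this is at most $e^{-2\ln n}=n^{-2}$. Taking a union bound over the $18/r^2 \le n/(2\ln n)$ sub-squares, the probability that some sub-square is empty is at most $1/(2n\ln n)=o(1)$. Thus w.h.p. every sub-square is nonempty, and the geometric argument above produces a crossing.

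I expect the only subtle point to be the convex-position claim, but it is forced by the geometry of the $3\times 3$ grid: the corner sub-squares are separated by a full middle row and column of width $r/(3\sqrt 2)$, so the four chosen points are strictly extremal in both coordinates and therefore all lie on their convex hull. Once that is noted, the two diagonals of the convex quadrilateral cross and both have length at most $r$ (since they lie inside a square of diameter $r$), so both are edges of $\dg$, which completes the proof.
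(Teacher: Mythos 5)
Your proof is correct and follows essentially the same approach as the paper: partition the unit square into squares of diagonal $r$, subdivide each into a $3\times 3$ grid of sub-squares of area $r^2/18$, union-bound the probability of an empty sub-square, and extract a crossing from four corner points in one big square. You use a slightly tighter count in the union bound ($18/r^2 \le n/(2\ln n)$ rather than $\le n$) and you spell out the convex-position argument that the paper leaves to its figure, but these are only cosmetic differences.
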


\begin{proof}
In order to prove that w.h.p. $\dg$ is not plane, we show that w.h.p. it has a pair of crossing edges. Partition the unit square into squares each with diagonal length $r$. Then subdivide each such square into nine sub-squares as depicted in Figure~\ref{fig:square}.
There are $\frac{18}{r^2}$ sub-squares, each of side length $\frac{r}{3 \sqrt{2}}$. The probability that no point lies in a specific sub-square is $(1 - \frac{r^2}{18})^n$.
Thus, the probability that there exists an empty sub-square
is at most
$$
\frac{18}{r^2} \left(1 - \frac{r^2}{18} \right)^n
\leq n \left(1 - \frac{c \ln n}{18 n} \right)^n \leq n^{1-c/18}
\leq \frac{1}{n},
$$
when $c \geq 36$.
Therefore, with probability at least $1-\frac{1}{n}$ all sub-squares contain points. By choosing four points $a$, $b$, $c$, and $d$ as depicted in Figure~\ref{fig:square}, it is easy to see that the edges $(a,b)$ and $(c,d)$ cross. 
Thus, w.h.p. $\dg$ has a pair of crossing edges, and hence w.h.p. it is not plane.
\end{proof}

In fact, Theorem~\ref{thm3} ensures that w.h.p. there exists a pair of crossing edges in each of the squares. This implies that there are $\Omega \left( \frac{n}{\ln n} \right)$ disjoint pair of crossing edges, while for $\dg$ to be not plane we need to show the existence of at least one pair of crossing edges. 
Thus, the value of $r$ provided by the connectivity
threshold seems rather weak. 
By a different approach, in the rest of this section we show that $n^{-\frac{2}{3}}$ is the correct threshold. 

\begin{lemma}
\label{clm1}
Let $(a,b)$ and $(c,d)$ be two crossing edges in $\dg$, and let $Q$ be the convex quadrilateral formed by $a$, $b$, $c$, and $d$. Then, two adjacent sides of $Q$ are edges of $\dg$.
\end{lemma}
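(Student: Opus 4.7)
The plan is to reduce the claim to a classical inequality for convex quadrilaterals: the sum of the two diagonals strictly exceeds the sum of either pair of opposite sides. Since $(a,b)$ and $(c,d)$ cross, they are the two diagonals of $Q$, so the cyclic order of the vertices of $Q$ is $a, c, b, d$. Thus the sides of $Q$ are $ac$, $cb$, $bd$, $da$, and the two pairs of opposite sides are $\{ac,\,bd\}$ and $\{cb,\,da\}$.

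First I would establish the classical inequality. Let $o$ be the crossing point of $(a,b)$ and $(c,d)$; by hypothesis $o$ lies in the interior of both diagonals. Applying the triangle inequality in triangles $\triangle aoc$ and $\triangle bod$ gives
\[
|ac| < |ao| + |oc| \quad\text{and}\quad |bd| < |ob| + |od|,
\]
and summing yields $|ac| + |bd| < (|ao|+|ob|) + (|oc|+|od|) = |ab| + |cd|$. The same argument applied to triangles $\triangle cob$ and $\triangle doa$ gives $|cb| + |da| < |ab| + |cd|$.

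Next I would use the hypothesis $|ab|\le r$ and $|cd|\le r$. Each of the two inequalities becomes strictly less than $2r$, so each pair of opposite sides contains at least one side of length strictly less than $r$; in particular, a side of length $\le r$, i.e., an edge of $\dg$. Let $e_1\in\{ac,\,bd\}$ and $e_2\in\{cb,\,da\}$ be such edges. A case check finishes the proof: any side from the first pair shares a vertex with any side from the second pair, since in the 4-cycle of sides the only non-adjacency is the opposite-side relation, and we chose $e_1$ and $e_2$ from different opposite-pairs. Explicitly, $\{ac,cb\}$ meet at $c$, $\{ac,da\}$ meet at $a$, $\{bd,cb\}$ meet at $b$, and $\{bd,da\}$ meet at $d$. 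Hence $e_1$ and $e_2$ are two adjacent sides of $Q$ that are edges of $\dg$.

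I do not expect any real obstacle: the classical inequality is a two-line computation from the triangle inequality at $o$, and the adjacency step is just the observation that in a $4$-cycle, picking one element from each of the two perfect matchings yields a pair sharing a vertex. The only thing worth being careful about is that $o$ is interior to both diagonals (guaranteed by the definition of \emph{cross} given earlier), which is what makes the triangle inequalities strict and lets us conclude lengths strictly less than $r$ rather than only $\le r$.
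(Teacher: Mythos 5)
Your proof is correct, but it follows a genuinely different route from the paper's. The paper picks a vertex of $Q$ whose interior angle is at least $\pi/2$ (one must exist since the four interior angles sum to $2\pi$), say $\angle cad$; in the triangle $\triangle cad$ this is the largest angle, so the opposite side $cd$ is the longest, giving $|ac|,|ad|\le |cd|\le r$ directly, and these two sides are adjacent at $a$. Your argument instead uses the classical fact that the sum of the diagonals of a convex quadrilateral strictly exceeds the sum of either pair of opposite sides, and then a small pigeonhole step: each opposite pair must contain a side shorter than $r$, and any choice of one side from each pair is automatically adjacent. Both arguments are elementary and both are complete; the comparison worth noting is that the paper's obtuse-angle argument immediately exhibits a single vertex lying within distance $r$ of all three other points. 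That vertex becomes the ``crown'' of what the paper then calls an \emph{anchor}, and the stronger ``there is an endpoint within distance $r$ of the other three'' statement is what Corollary~\ref{2-hop-cor} actually records. Your proof also yields this with one extra observation: the shared vertex of $e_1$ and $e_2$ is always within $r$ of both of its neighbours along $Q$ and, via the given edge it lies on, of its opposite vertex as well; but this is a consequence you would need to extract, whereas the paper's choice of argument makes it the focal point. So the two proofs land in the same place at essentially the same cost; the paper's route is slightly more tailored to what it wants to reuse downstream.
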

\begin{proof}
Refer to Figure~\ref{fig:square1}.
At least one of the angles of $Q$, say $\angle cad$, is bigger than or equal to $\pi/2$.
It follows that in the triangle $\triangle cad$ the side $cd$ is the longest, i.e., $|cd|\ge \max\{|ac|,|ad|\}$. Since $|cd|\le r$, both $|ac|$ and $|ad|$ are at most $r$. Thus, $ac$ and $ad$\textemdash which are adjacent\textemdash are edges of $\dg$.
\end{proof}

\begin{figure}[htb]
  \centering
\setlength{\tabcolsep}{0in}
  $\begin{tabular}{cc}
\multicolumn{1}{m{.5\columnwidth}}{\centering\includegraphics[width=.3\columnwidth]{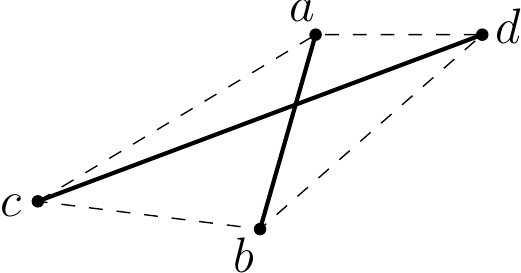}}
&\multicolumn{1}{m{.5\columnwidth}}{\centering\includegraphics[width=.3\columnwidth]{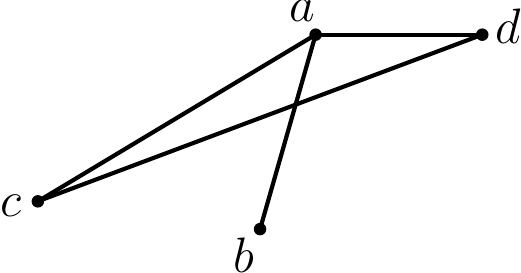}}
\\
(a) & (b)
\end{tabular}$
  \caption{(a) Illustration of Lemma~\ref{clm1}. (b) Crossing edges $(a,b)$ and $(c,d)$ form an anchor.}
\label{fig:square1}
 \vspace{-10pt}
\end{figure}

In the proof of Lemma~\ref{clm1}, $a$ is connected to $b$, $c$, and $d$. So the distance between $a$ to each of $b$, $c$, and $d$ is at most $r$. Thus, we have the following corollary.

\begin{corollary}
\label{2-hop-cor}
The endpoints of every two crossing edges in $\dg$ are at distance at most $2r$ from each other. Moreover, there exists an endpoint which is within distance $r$ from other endpoints.
\end{corollary}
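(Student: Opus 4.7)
The plan is to invoke Lemma~\ref{clm1} and, more importantly, re-read its proof, which already yields everything we need up to a short triangle inequality argument. Recall that in the proof of Lemma~\ref{clm1}, we picked a vertex of the convex quadrilateral $Q$ (call it $a$) at which the interior angle is at least $\pi/2$, and concluded that in the triangle $\triangle cad$ the side $cd$ is longest, so $|ac|\le|cd|\le r$ and $|ad|\le|cd|\le r$. Combined with $|ab|\le r$ (which holds because $(a,b)$ is an edge of $\dg$), this says that $a$ is within distance $r$ of each of $b$, $c$, and $d$. That is exactly the "moreover" clause.

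For the first clause, I would simply bound the two remaining pairwise distances, namely $|bc|$ and $|bd|$. By the triangle inequality,
\[
|bc|\le |ba|+|ac|\le r+r=2r,\qquad |bd|\le |ba|+|ad|\le r+r=2r,
\]
and the other four pairwise distances among $\{a,b,c,d\}$ are already at most $r$ by the previous paragraph. Hence all six pairwise distances among the four endpoints are at most $2r$, proving the first statement.

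There is no real obstacle here: the corollary is a direct consequence of the quantitative content of the proof of Lemma~\ref{clm1} (which identifies a "dominant" endpoint within distance $r$ of the other three), together with one application of the triangle inequality to handle the two remaining pairs. The only subtlety worth pointing out is that the existence of the vertex $a$ at which the interior angle is $\ge \pi/2$ is guaranteed because the interior angles of the convex quadrilateral $Q$ sum to $2\pi$, so at least one of them is at least $\pi/2$; this is implicit in the proof of Lemma~\ref{clm1} and should be noted so the reader sees where the "there exists an endpoint" in the corollary comes from.
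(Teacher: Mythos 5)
Your proof is correct and follows the same route as the paper: the paper also derives the corollary directly from the observation in the proof of Lemma~\ref{clm1} that $a$ is within distance $r$ of each of $b$, $c$, $d$, with the remaining pairwise distances bounded by $2r$ via the triangle inequality (the paper leaves this last step implicit, which you spell out).
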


Based on the proof of Lemma~\ref{clm1}, we define an {\em anchor} as a set $\{a, b, c, d\}$ of four points in $\dg$ such that three of them form a triangle, say $\triangle cad$, and the fourth vertex, $b$, is connected to $a$ by an edge which crosses
$cd$; see Figure~\ref{fig:square1}(b). We call $a$ as the {\em crown} of the anchor. The crown is within distance $r$ from the other three points. Note that $bc$ and $bd$ may or may not be edges of $\dg$.
In view of Lemma~\ref{clm1}, two crossing
edges in $\dg$ form an anchor. Conversely, every anchor in $\dg$ introduces a pair of crossing edges. 

\begin{observation}
\label{anchor-obs}
$\dg$ is plane if and only if it has no anchor.
\end{observation}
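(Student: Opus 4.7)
The plan is to prove the biconditional by two short implications, both of which follow immediately from Lemma~\ref{clm1} and the definition of an anchor.

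First I would handle the easy direction (anchor $\Rightarrow$ non-plane). Suppose $\dg$ contains an anchor $\{a,b,c,d\}$ with crown $a$. By definition of an anchor, $ac$, $ad$, $cd$ are edges of $\dg$ (forming the triangle $\triangle cad$), and $ab$ is an edge of $\dg$ that crosses $cd$. Thus $ab$ and $cd$ form a crossing pair of edges in $\dg$, so $\dg$ is not plane.

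For the converse (non-plane $\Rightarrow$ anchor), suppose $\dg$ contains two crossing edges, which we label $ab$ and $cd$. Let $Q$ be the convex quadrilateral with vertex set $\{a,b,c,d\}$; because $ab$ and $cd$ cross, they are precisely the two diagonals of $Q$, and the four sides of $Q$ are $ac$, $cb$, $bd$, $da$. By Lemma~\ref{clm1}, two adjacent sides of $Q$ are also edges of $\dg$. Relabeling the four points if necessary (specifically, choosing $a$ to be the common endpoint of these two sides, which is exactly the vertex of angle at least $\pi/2$ singled out in the proof of Lemma~\ref{clm1}), these two sides can be taken to be $ac$ and $ad$. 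Then $ac$, $ad$, $cd$ are all edges of $\dg$ and form the triangle $\triangle cad$, while the remaining vertex $b$ is joined to $a$ by the edge $ab$ which crosses $cd$. This is precisely the definition of an anchor with crown $a$.

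No new probabilistic or geometric estimates are required; all the substantive content is already encapsulated in Lemma~\ref{clm1}, which is why this statement is recorded as an observation. The only point requiring a moment of care is aligning the labeling of an arbitrary crossing pair $ab,cd$ with the naming convention used in the definition of an anchor, but the freedom to permute the four points makes this purely cosmetic. I do not anticipate any genuine obstacle beyond this bookkeeping.
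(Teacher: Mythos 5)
Your proof is correct and follows the same reasoning the paper uses: the paper states both implications in prose just before the observation (``In view of Lemma~\ref{clm1}, two crossing edges in $\dg$ form an anchor. Conversely, every anchor in $\dg$ introduces a pair of crossing edges.'') and leaves the observation without a formal proof. You merely spell out the relabeling step, which is indeed the only point of bookkeeping.
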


\begin{theorem}
\label{plane-thr}
$n^{-\frac{2}{3}}$ is a threshold for $\dg$ to be plane.
\end{theorem}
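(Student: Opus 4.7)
The plan is to leverage Observation~\ref{anchor-obs} ($\dg$ is plane iff it contains no anchor) and to note that $n^{-2/3}$ is exactly the function $n^{-k/(2k-2)}$ at $k=4$. Since in any anchor on $\{a,b,c,d\}$ the crown $a$ is adjacent to each of $b$, $c$, $d$, every anchor induces a connected subgraph on four vertices. Consequently, the two directions of the theorem correspond to the two regimes of Theorem~\ref{connected-k-thr} with $k=4$.

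For $r = o(n^{-2/3})$, Theorem~\ref{connected-k-thr} directly gives that w.h.p.\ $\dg$ has no connected subgraph on four vertices; a fortiori it has no anchor, so by Observation~\ref{anchor-obs} it is plane w.h.p.

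For $r = \omega(n^{-2/3})$, I would mirror the second moment argument of Theorem~\ref{connected-k-thr}. Enumerate the $\binom{n}{4}$ 4-subsets of random points as $P_1, \dots, P_{\binom{n}{4}}$, set $X_i = 1$ iff $P_i$ forms an anchor, and let $X = \sum_i X_i$. First I establish $\Ex[X_i] = \Theta(r^6)$, so that $\Ex[X] = \Theta(n^4 r^6) \to \infty$. The upper bound $\Ex[X_i] = O(r^6)$ is the tiling argument of Theorem~\ref{connected-k-thr}, since by Corollary~\ref{2-hop-cor} the four points of any anchor fit in a $2r \times 2r$ box. The lower bound $\Ex[X_i] = \Omega(r^6)$ is obtained by fixing an explicit anchor template, for example $a$ at the origin, $b=(r/2,0)$, $c=(r/4, r/8)$, $d=(r/4,-r/8)$; perturbing each of $b$, $c$, $d$ within a disk of radius $\epsilon r$ about its template position (for a sufficiently small constant $\epsilon$) preserves the anchor property, and integrating over all placements of $a$ in $[0,1]^2$ (up to negligible boundary effects) yields $\Pr[X_i=1] = \Omega(r^6)$.

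The covariance bound then follows the exact template of Theorem~\ref{connected-k-thr}. Splitting $\Var(X) \le \sum_{w=1}^{4} \sum_{|P_i \cap P_j|=w} \Ex[X_iX_j]$ and conditioning on $P_i$ being an anchor, the remaining $4-w$ points of $P_j$ must lie within an $O(r) \times O(r)$ region around a shared point, giving $\Pr[X_j=1 \given X_i=1] = O(r^{2(4-w)})$ and hence $\Ex[X_iX_j] = O(r^{14-2w})$. A direct calculation then yields $\Var(X)/\Ex[X]^2 = O\bigl(\sum_{w=1}^{4} 1/(n^w r^{2w-2})\bigr)$, each term of which tends to zero for $r = \omega(n^{-2/3})$. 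Chebyshev's inequality then gives $\Pr[X=0] \to 0$, so w.h.p.\ $\dg$ contains an anchor and is not plane. The main difficulty I expect is the geometric lower bound on $\Ex[X_i]$: verifying formally that a neighborhood of an explicit anchor template of measure $\Omega(r^6)$ consists entirely of anchors. The rest is a direct specialization of the argument already written out in the proof of Theorem~\ref{connected-k-thr}.
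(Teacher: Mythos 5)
Your proposal matches the paper's proof closely: both directions reduce to anchors via Observation~\ref{anchor-obs}, and the $r=\omega(n^{-2/3})$ direction is the second moment method of Theorem~\ref{connected-k-thr} specialized to $k=4$ with $X_i$ indicating that the $4$-subset $P_i$ forms an anchor. The only minor difference is the proof that $\Ex[X_i]=\Omega(r^6)$: you perturb an explicit anchor template, while the paper places one point of $P_i$ in each of four designated sub-squares of side $r/(3\sqrt{2})$ inside a diagonal-$r$ cell so that $P_i$ becomes a convex $4$-clique; both give the needed $\Omega(r^6)$ bound and the rest of the argument is identical.
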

\begin{proof}
In order to show that $\dg$ is plane, by Observation~\ref{anchor-obs}, it is enough to show that it has no anchors. Every anchor has four points and it is connected. By Theorem~\ref{connected-k-thr}, if $r=o(n^{-\frac{2}{3}})$, then w.h.p. $\dg$ has no connected subgraph on $4$ points, and hence it has no anchors.
This proves the first statement. 

We prove the second statement by adjusting the proof of Theorem~\ref{connected-k-thr} for $k=4$. Assume
$r=\omega(n^{-\frac{2}{3}})$. Let $P_1,\dots, P_{n\choose 4}$ be an enumeration of all subsets of $4$ points. Let $X_i$ be equal to 1 if $\idg{P_i}$ contains an anchor, and 0 otherwise. Let $X=\sum X_i$. In view of Chebyshev's inequality we need to show that $\frac{\Var (X)}{\Ex[X]^2}$ tends to 0 as $n$ goes to infinity.

Partition the unit square into a set $\{s_1,\dots,s_{2/{r^2}}\}$ of squares with diagonal length $r$. Then, subdivide each square $s_j$, into nine sub-squares $s_j^1,\dots, s_j^9$ as depicted in Figure~\ref{fig:square}. If each of $s_j^1,s_j^3,s_j^7, s_j^9$ or each of $s_j^2,s_j^4,s_j^6, s_j^8$ contains a point of $P_i$, then $\idg{P_i}$ is a convex clique of size four and hence it contains an anchor. Thus,
$$
\Pr[X_i=1]\ge \frac{r^{6}}{2^{3}}\cdot \frac{2}{9^4}.
$$
This implies that $\Ex[X_i]=\Omega(r^{6})$, and hence $\Ex[X] =\Omega(n^4r^6)$. Therefore, $$\Ex[X]^2\ge c''n^8r^{12},$$ for some constant $c''>0$.
By a similar argument as in the proof of Theorem~\ref{connected-k-thr} we bound the variance of $X$ from above by $$\Var(X)\le c''_1 n^{7}r^{12}+c''_2 n^{6}r^{10}+c''_3 n^{5}r^{8}+c''_4 n^{4}r^{6}.$$

Since $r=\omega(n^{-\frac{2}{3}})$, $\frac{\Var (X)}{\Ex[X]^2}$ tends to 0 as $n$ goes to infinity. That is, w.h.p. $\dg$ has an anchor.
By Observation~\ref{anchor-obs}, w.h.p. $\dg$ is not plane. 
\end{proof}

As a direct consequence of the proof of Theorem~\ref{plane-thr}, we have the following:
\begin{corollary}
With high probability if a random geometric graph is not plane, then it has a clique of size four.
\end{corollary}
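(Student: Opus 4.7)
The statement amounts to showing $\Pr[\dg\text{ is not plane and contains no }K_4]\to 0$ as $n\to\infty$. The plan is to split along the threshold $n^{-2/3}$ that governs both properties: Theorem~\ref{plane-thr} gives $n^{-2/3}$ as the threshold for ``plane'', and Theorem~\ref{clique-k-thr} with $k=4$ gives the same value $n^{-k/(2k-2)}=n^{-2/3}$ as the threshold for ``contains a $K_4$''. The cleanest route is to reuse the second-moment computation from Theorem~\ref{plane-thr} after noticing that the witnesses used there to lower-bound the anchor count are themselves $K_4$s.

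If $r=o(n^{-2/3})$, the first half of Theorem~\ref{plane-thr} already says that w.h.p.\ $\dg$ is plane, so $\Pr[\text{not plane}]\to 0$ and the bad event dominates trivially. If $r=\omega(n^{-2/3})$, I would revisit the lower-bound half of Theorem~\ref{plane-thr} with a sharpened indicator. Define $Y_i=1$ iff the four points of $P_i$ land in the four corner sub-squares (or the four mid-edge sub-squares) of some diagonal-$r$ tile in the Figure~\ref{fig:square} partition. Such a configuration is automatically a convex clique of size four, so $Y_i\le X_i$ pointwise, where $X_i$ is the anchor indicator used in the proof of Theorem~\ref{plane-thr}. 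The lower bound $\Ex[Y_i]=\Omega(r^6)$ is exactly the bound that was established for $\Ex[X_i]$ there, giving $\Ex[\sum Y_i]=\Omega(n^4r^6)$. Since $Y_i\le X_i$ pointwise, $\Ex[Y_iY_j]\le\Ex[X_iX_j]$, and so the variance estimate $\Var(\sum X_i)\le c_1''n^7r^{12}+c_2''n^6r^{10}+c_3''n^5r^8+c_4''n^4r^6$ derived in Theorem~\ref{plane-thr} transfers verbatim to $\sum Y_i$. The ratio $\Var(\sum Y_i)/\Ex[\sum Y_i]^2$ therefore tends to zero, and Chebyshev's inequality yields $\Pr[\sum Y_i=0]\to 0$. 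Because every $Y_i=1$ certifies a $K_4$ in $\dg$, this gives $\Pr[\text{no }K_4]\to 0$ in this regime. (Equivalently, one can simply invoke the second half of Theorem~\ref{clique-k-thr} with $k=4$.) Combining the two regimes closes the argument.

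The main conceptual point—rather than an obstacle—is the observation that the configurations used inside the lower bound of Theorem~\ref{plane-thr} were already $K_4$s (in convex position), so the same second-moment machinery delivers both ``not plane'' and ``has a $K_4$'' simultaneously with no additional estimate required. The only mild caveat is the borderline regime $r=\Theta(n^{-2/3})$, where neither of the underlying threshold theorems makes an asymptotic claim; the corollary's scope is accordingly the same as Theorem~\ref{plane-thr}'s, which is what is intended.
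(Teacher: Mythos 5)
Your proposal is correct and follows essentially the same route as the paper: the paper states this corollary as a direct consequence of the proof of Theorem~\ref{plane-thr}, in which the lower-bound regime $r=\omega(n^{-2/3})$ is handled by exhibiting (via the nine-sub-square partition) a convex clique of size four, and the upper-bound regime $r=o(n^{-2/3})$ makes the antecedent ``not plane'' fail w.h.p. Your explicit introduction of the indicator $Y_i$ and the observation that both the first-moment lower bound and the second-moment upper bound transfer from the anchor count to the $K_4$ count is a slightly more careful rendering of the same argument (and, as you note, is equivalently obtained by citing Theorem~\ref{clique-k-thr} with $k=4$), including the same implicit restriction to the regimes where Theorem~\ref{plane-thr} makes an assertion.
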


Note that every anchor introduces a crossing and each crossing introduces an anchor. Since, every anchor is a connected graph and has four points, by~\eqref{E-X} we have the following corollary.
\begin{corollary}
\label{cr:cor}
The expected number of crossings in $\dg$ is $\Theta(n^4 r^6)$.
\end{corollary}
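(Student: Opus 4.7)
My plan is to express $C$, the number of edge-crossings in $\dg$, as a sum of $0/1$ indicators indexed by $4$-tuples of points, and then sandwich $\Ex[C]$ between two $\Theta(r^6)$-per-tuple estimates that are already in hand.

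First, enumerate the $4$-subsets of points as $P_1,\dots,P_{\binom{n}{4}}$, and for each $i$ let $Y_i$ be the indicator that the four points of $P_i$ are in convex position and that both diagonals of the resulting quadrilateral have length at most $r$. Because four points admit only three perfect matchings into two pairs, and at most one such matching (the pair of diagonals, in the convex case) can ever cross, each $P_i$ contributes at most one crossing, and $C=\sum_i Y_i$.

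For the upper bound I would appeal to Corollary~\ref{2-hop-cor}: whenever two edges cross, one of their four endpoints is within distance $r$ of the other three, so $\idg{P_i}$ is connected. Hence $Y_i\le X_i$, where $X_i$ is the connectedness indicator from Theorem~\ref{connected-k-thr} with $k=4$, and Equation~\eqref{E-X} immediately gives $\Ex[C]\le \Ex[X]=O(n^4 r^6)$.

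For the lower bound I would recycle the geometric construction from the proof of Theorem~\ref{plane-thr}: tile the unit square with squares of diagonal $r$ and partition each such square into the nine equal sub-cells shown in Figure~\ref{fig:square}; placing one point of $P_i$ into each of the four corner sub-cells of a common cell forces the four points to be in convex position with all six pairwise distances at most $r$, so a crossing must occur. This yields $\Pr[Y_i=1]=\Omega(r^6)$, and summing gives $\Ex[C]=\binom{n}{4}\,\Omega(r^6)=\Omega(n^4 r^6)$. Combining the two bounds produces $\Ex[C]=\Theta(n^4 r^6)$. There is no substantial obstacle here, since the corollary is essentially a repackaging of the first-moment estimates already used to prove Theorem~\ref{plane-thr}; the only thing to verify carefully is that each crossing is counted exactly once in the sum $\sum_i Y_i$, which follows from the observation above that a $4$-tuple supports at most one crossing pair.
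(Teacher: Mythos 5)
Your argument is correct and matches the paper's approach: both reduce the crossing count to 4-point configurations, then apply the upper bound from Equation~\eqref{E-X} (connected 4-sets) and the $\Omega(r^6)$ per-tuple lower bound from the convex-quadrilateral construction in the proof of Theorem~\ref{plane-thr}. If anything you are slightly more careful than the paper's one-line justification (``every anchor introduces a crossing and each crossing introduces an anchor''), since you explicitly note that a 4-point set supports at most one crossing, which makes the identity $C=\sum_i Y_i$ exact rather than merely an order-of-magnitude correspondence.
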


\section{The threshold for $\dg$ to be planar}
\label{planar-section}

In this section we investigate the threshold for the planarity of a random geometric graph; this is a decreasing property. By Kuratowski's theorem, a finite graph is planar if and only if it does not contain a subgraph that is a subdivision of $K_5$ or of $K_{3,3}$. Note that any plane random geometric graph is planar too; observe that the reverse statement may not be true. Thus, the threshold for planarity seems to be larger than the threshold of being plane. By a similar argument as in the proof of Theorem~\ref{thm3} we can show that if $r\ge \sqrt{{c \ln n}/{n}}$, then w.h.p. each square with diagonal length $r$ contains $K_5$, and hence $\dg$ is not planar.

\begin{theorem}
\label{planarity-thr}
$n^{-\frac{5}{8}}$ is a threshold for $\dg$ to be planar.
\end{theorem}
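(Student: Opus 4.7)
The plan is to reduce both directions of the threshold to results already proved in the paper, using Kuratowski's characterization of planar graphs. The key numerical observation is that $\tfrac{-k}{2k-2}=-\tfrac{5}{8}$ precisely when $k=5$, so the planarity threshold should be driven by the appearance of a connected $5$-vertex substructure, and more specifically by the appearance of a $K_5$.

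For the direction $r=o(n^{-5/8})$, I would argue that w.h.p.\ $\dg$ is planar, as follows. By Kuratowski's theorem (stated in the introduction), a graph fails to be planar if and only if it contains, as a subgraph, a subdivision of $K_5$ or of $K_{3,3}$. Every such Kuratowski subdivision is connected and has at least $5$ branch vertices (exactly $5$ in the case of $K_5$ and at least $6$ in the case of $K_{3,3}$), hence in particular contains a connected subgraph on $5$ of the points of $\dg$. Invoking Theorem~\ref{connected-k-thr} with $k=5$ shows that when $r=o(n^{-5/8})$, w.h.p.\ $\dg$ contains no connected subgraph on $5$ points; therefore w.h.p.\ it contains no Kuratowski subdivision and is planar.

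For the direction $r=\omega(n^{-5/8})$, I would argue that w.h.p.\ $\dg$ is non-planar by producing a $K_5$. Applying Theorem~\ref{clique-k-thr} with $k=5$, w.h.p.\ $\dg$ contains a clique of size $5$, i.e., a copy of $K_5$ as a subgraph. Since $K_5$ is non-planar as an abstract graph, and any supergraph of a non-planar graph is non-planar, w.h.p.\ $\dg$ is non-planar. This completes the proof.

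I do not expect a genuine obstacle: both halves of the statement follow directly from the already-established thresholds, and the only detail to record is that a Kuratowski subdivision always induces a connected subgraph on at least $5$ vertices of $\dg$, which is immediate from the definitions. The mild asymmetry between the two halves (one uses the ``connected subgraph'' threshold, the other uses the ``clique'' threshold) is what lets us match the exponent exactly: the cheapest obstruction to planarity that appears in $\dg$ is precisely a clique of size $5$, and its threshold coincides with that of any connected $5$-vertex configuration.
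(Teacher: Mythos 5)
Your proof is correct and takes essentially the same approach as the paper: non-planarity for $r=\omega(n^{-5/8})$ via Theorem~\ref{clique-k-thr} (existence of $K_5$), and planarity for $r=o(n^{-5/8})$ via Theorem~\ref{connected-k-thr} applied to Kuratowski subdivisions. Your treatment of the $r=o(n^{-5/8})$ direction is in fact slightly cleaner than the paper's, which first rules out $K_5$ and $K_{3,3}$ separately before turning to subdivisions; you go straight to the observation that any Kuratowski subdivision is connected on at least $5$ vertices and hence yields a connected $5$-point subgraph, which is all that is needed.
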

\begin{proof}
By Theorem~\ref{clique-k-thr}, if $r=\omega(n^{-\frac{5}{8}})$, then w.h.p. $\dg$ has a clique of size $5$. Thus, w.h.p. $\dg$ contains $K_5$ and hence it is not planar. 

If $r=o(n^{-\frac{5}{8}})$, then by Theorem~\ref{connected-k-thr}, w.h.p. $\dg$ has no connected subgraph on $5$ points, and hence it has no $K_5$. Similarly, if $r=o(n^{-\frac{3}{5}})$, then  w.h.p. $\dg$ has no connected subgraph on $6$ points, and hence it has no $K_{3,3}$. 
Since $n^{-\frac{5}{8}} < n^{-\frac{3}{5}}$, it follows that if $r=o(n^{-\frac{5}{8}})$, then w.h.p. $\dg$ has neither $K_5$ nor $K_{3,3}$ as a subgraph. 

Note that, in order to prove that $\dg$ is planar, we have to show that it does not contain any subdivision of either
$K_5$ or $K_{3,3}$. Any subdivision of either $K_5$ or $K_{3,3}$ contains a connected subgraph on $k\ge 5$ vertices. Since $n^{-5/8} < n^{-k/(2k-2)}$ for all $k \geq 5$, in view of Theorem~\ref{connected-k-thr}, we conclude that if $r =o(n^{-\frac{5}{8}})$, then w.h.p. $\dg$ has no subdivision of $K_5$ and $K_{3,3}$, and hence $\dg$ is planar. 
\end{proof}

As a direct consequence of the proof of Theorem~\ref{planarity-thr}, we have the following:
\begin{corollary}
With high probability if a random geometric graph does not contain a clique of size five, then it is planar.
\end{corollary}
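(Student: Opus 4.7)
The plan is to show that the event ``$\dg$ has no $K_5$ subgraph and is not planar'' has probability tending to $0$ for every admissible $r=r(n)$. Equivalently, I will verify the contrapositive of the stated implication: whenever $\dg$ fails to be planar, w.h.p.\ it contains a clique of size $5$.

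The first step is a purely combinatorial reduction via Kuratowski's theorem. A non-planar graph must contain a subdivision of $K_5$ or of $K_{3,3}$, and any such subdivision is connected on at least $5$ vertices; moreover, the only subdivision of either of these graphs supported on exactly $5$ vertices is $K_5$ itself. Hence, if $\dg$ is non-planar but contains no $K_5$, then $\dg$ must contain a connected subgraph on at least $6$ points. By trimming leaves from a spanning tree of that subgraph, one extracts a set of exactly $6$ points whose induced subgraph is connected, which is the form usable by Theorem~\ref{connected-k-thr}.

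The second step is to dichotomize on the size of $r$ relative to the planarity threshold $n^{-5/8}$. If $r=\omega(n^{-5/8})$, then Theorem~\ref{clique-k-thr} with $k=5$ gives that $\dg$ contains a $K_5$ w.h.p., so the event ``no $K_5$'' is already negligible and the target implication holds vacuously. If instead $r=O(n^{-5/8})$, then since $n^{-5/8}=o(n^{-3/5})$ we have $r=o(n^{-3/5})$, and Theorem~\ref{connected-k-thr} with $k=6$ ensures that w.h.p.\ $\dg$ has no connected subgraph on $6$ points. Combined with the reduction of the previous paragraph, this rules out the combined event ``no $K_5$ and not planar'' in this regime as well.

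The only potentially subtle spot is the transition at $r=\Theta(n^{-5/8})$, where neither of the two moment estimates invoked above is sharp on its own. However, the two bounds overlap in this window\textemdash the $K_5$ estimate of Theorem~\ref{clique-k-thr} is nontrivial already for $r$ slightly above $n^{-5/8}$, while the ``no connected $6$-set'' estimate of Theorem~\ref{connected-k-thr} holds for all $r=o(n^{-3/5})$, an inequality strictly weaker than $r=O(n^{-5/8})$\textemdash so taking whichever bound is smaller at each $n$ suffices. Thus no refined calculation is required, and the corollary follows directly from the two threshold theorems already proved.
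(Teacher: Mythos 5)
Your argument is correct and is essentially the same as the paper's implicit one: the paper treats this corollary as a direct consequence of the proof of Theorem~\ref{planarity-thr}, which already contains the two ingredients you combine, namely Theorem~\ref{clique-k-thr} with $k=5$ (producing a $K_5$ when $r=\omega(n^{-5/8})$) and Theorem~\ref{connected-k-thr} with $k=6$ together with the gap $n^{-5/8}=o(n^{-3/5})$ (excluding any connected $6$-point subgraph when $r=O(n^{-5/8})$). Your Kuratowski reduction\textemdash any non-planar graph without a $K_5$ contains a connected subgraph on at least $6$ vertices, and one may trim a spanning tree to exactly $6$\textemdash and the remark that choosing at each $n$ whichever of the two bounds applies covers the transition window make explicit precisely what the paper leaves unstated.
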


\section{Free edges in $\dg$}
\label{free-edge-section}
Motivated by the problem of finding bottleneck plane matching of a point set (see \cite{Abu-Affash2015}), we considered the problem of finding a free edge in unit disk graphs. A {\em free edge} in $\dg$ is an edge whose interior is not intersected by other edges of $\dg$. 
Although at the first glance it seems that any unit disk graph has at least one free edge, this is not always true. Figure~\ref{RB-fig} shows examples of unit disk graphs that do not have any free edges, i.e., all the edges are crossed. These examples imply that the ``existence of a free edge'' is not a monotone property for $\dg$. In this section we investigate the existence of free edges in random geometric graphs. Throughout this section we assume that points are chosen on a torus, i.e., a unit square with wraparound.  

\begin{figure}[htb]
  \centering
\setlength{\tabcolsep}{0in}
  $\begin{tabular}{ccc}
\multicolumn{1}{m{.35\columnwidth}}{\centering\includegraphics[width=.34\columnwidth]{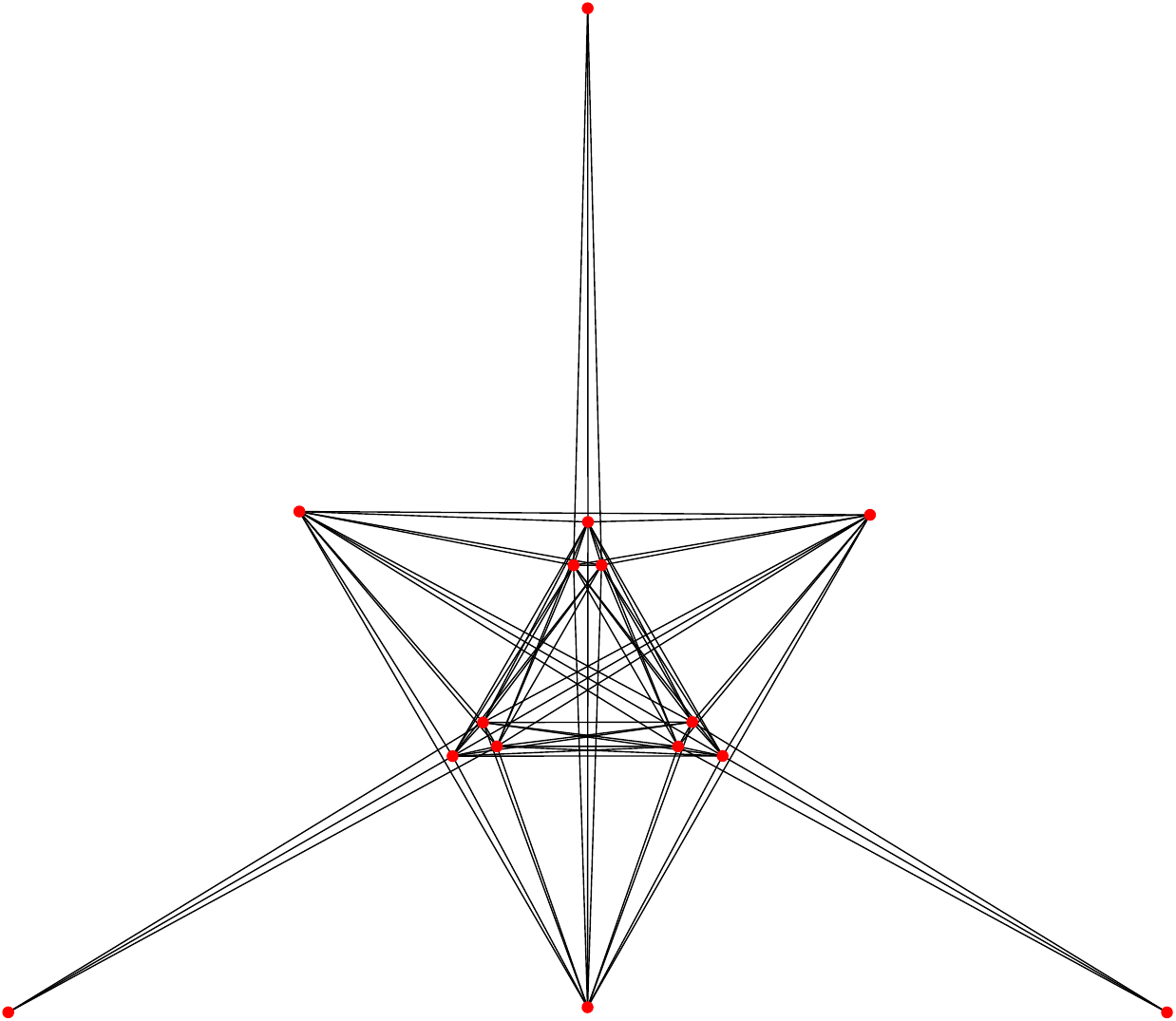}}
&\multicolumn{1}{m{.28\columnwidth}}{\centering\includegraphics[width=.27\columnwidth]{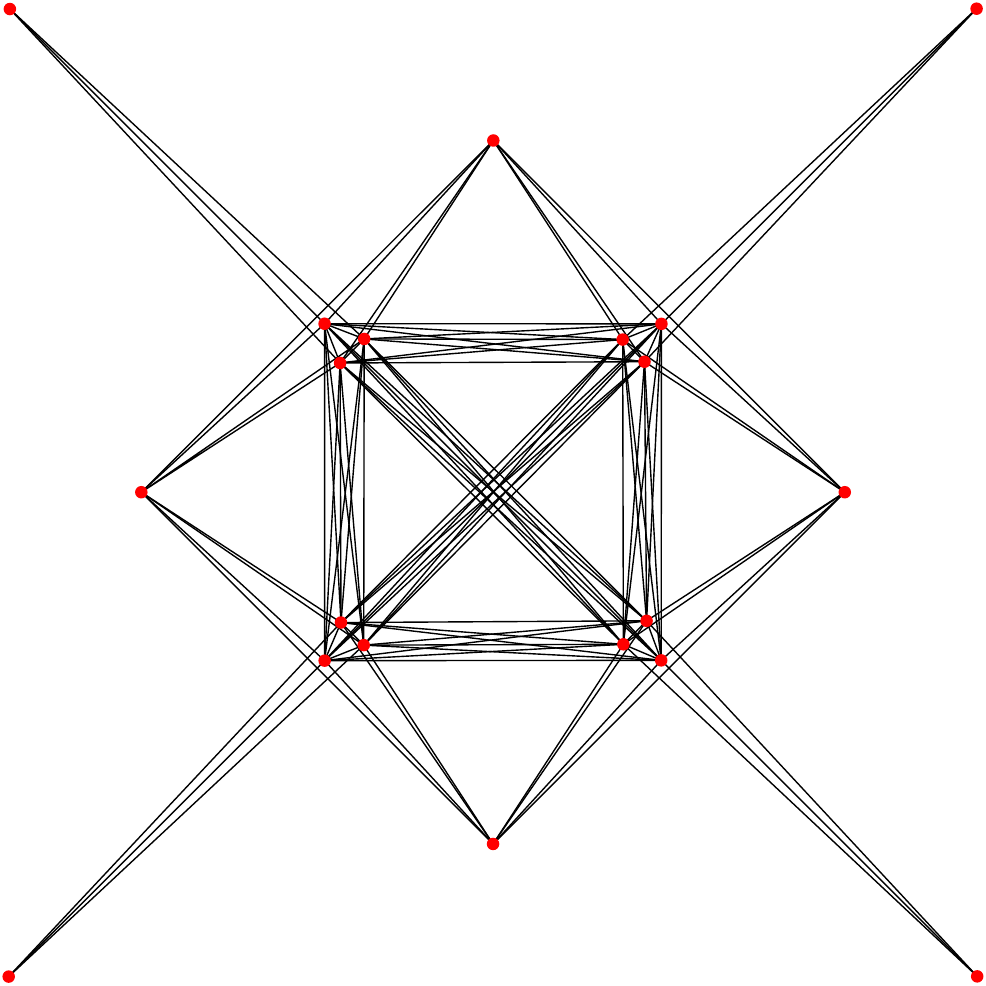}}
&\multicolumn{1}{m{.37\columnwidth}}{\centering\includegraphics[width=.36\columnwidth]{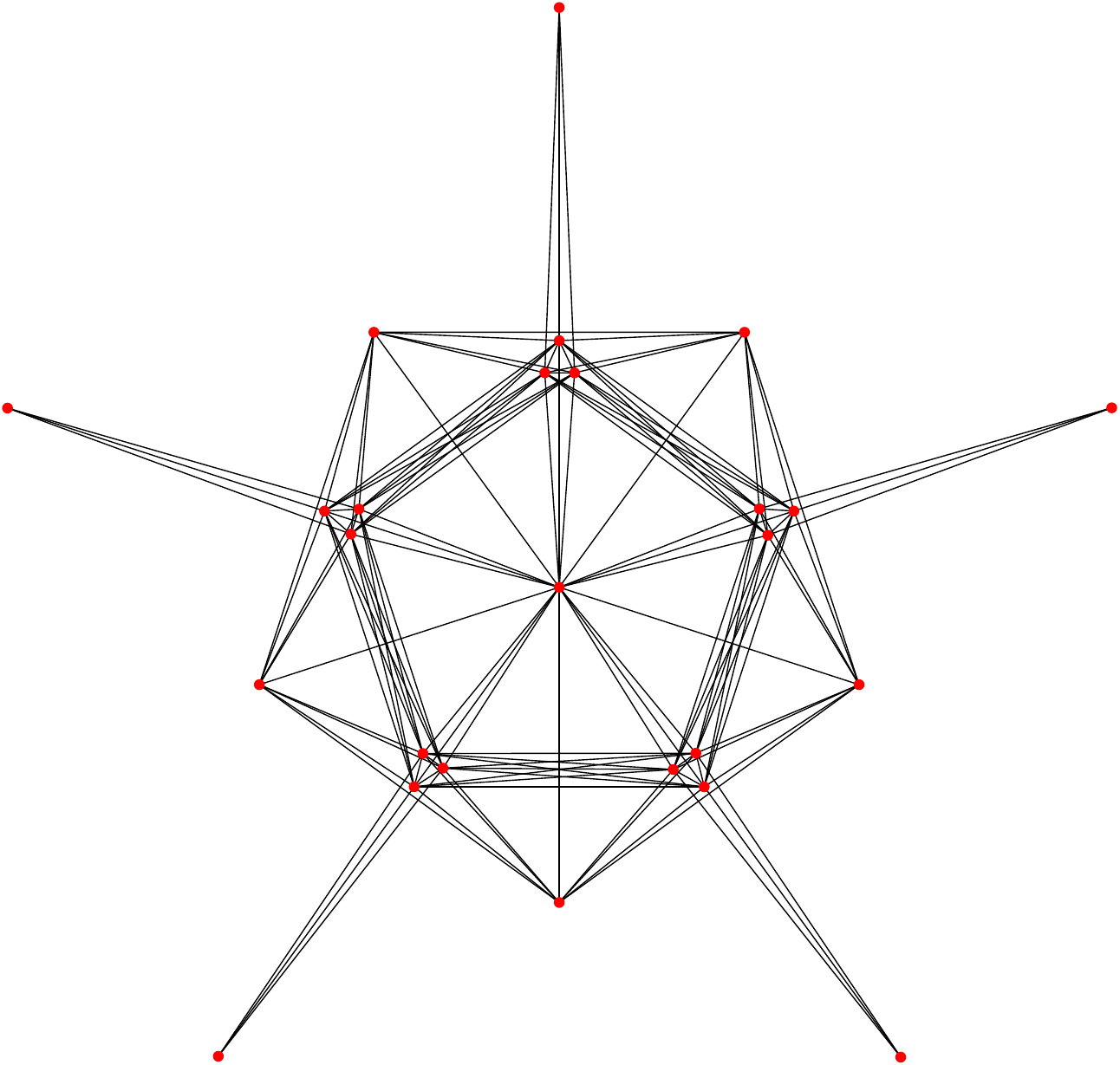}}
\end{tabular}$
\vspace{-5pt}
  \caption{Unit disk graphs with 15, 20, and 26 points, that do not have any free edge.}
\label{RB-fig}
\end{figure}

\begin{lemma}
\label{free-edge-lemma1}
If $r=\omega\left(\frac{1}{n}\right)$ and $r=o\left(\frac{1}{n^{1/2+\epsilon}}\right)$, for any constant $\epsilon>0$, then $\dg$ has a free edge with high probability.
\end{lemma}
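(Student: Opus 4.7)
The plan is to prove the existence of a free edge via a first-moment subtraction: show that w.h.p.\ the number of edges is much larger than twice the number of crossings. Let $E$ be the number of edges and $C$ the number of crossings of $\dg$. Every non-free edge participates in at least one crossing, and each crossing lies on exactly two edges, so
\[
\#\text{free edges} \ge E - 2C.
\]
It therefore suffices to establish concentration of $E$ from below and of $C$ from above, and then verify that the lower bound on $E$ dominates.

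First I would bound $E$ from below. On the torus, $\Pr[|p_ip_j|\le r] = \pi r^2$, hence $\Ex[E] = \binom{n}{2}\pi r^2 = \Theta(n^2 r^2)$. The torus structure makes the indicator variables $X_{ij} = \mathbf{1}[|p_ip_j|\le r]$ pairwise independent: disjoint index pairs depend on disjoint points, and for a shared vertex $i$, translation invariance gives
\[
\Ex[X_{ij}X_{ik}] = \Ex_{p_i}\!\bigl[\Pr[X_{ij}=1\given p_i]\cdot\Pr[X_{ik}=1\given p_i]\bigr] = (\pi r^2)^2 = \Ex[X_{ij}]\Ex[X_{ik}].
\]
So $\Var(E)\le \Ex[E]$, and since $r=\omega(1/n)$ forces $\Ex[E]\to\infty$, Chebyshev's inequality yields $E\ge \tfrac12\Ex[E]$ w.h.p.

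Next I would bound $C$ from above. Corollary~\ref{cr:cor} gives $\Ex[C]=\Theta(n^4 r^6)$, so Markov's inequality at level $\log n$ gives $C\le (\log n)\,\Ex[C]$ with probability $1-o(1)$. Combining the two w.h.p.\ estimates,
\[
\#\text{free edges} \ge \tfrac12\Ex[E] - 2(\log n)\,\Ex[C] = \Ex[E]\left(\tfrac12 - \Theta(n^2r^4\log n)\right).
\]
Under the hypothesis $r=o(n^{-1/2-\epsilon})$ one has $n^2r^4 = o(n^{-4\epsilon})$, which overwhelms the $\log n$ factor, so the bracketed expression stays positive and the number of free edges is w.h.p.\ positive (in fact of the same order as $\Ex[E]$).

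The delicate step is the concentration of $E$: the pairwise-independence argument genuinely uses the torus assumption, which is why the lemma is stated in that model, and Chebyshev is only tight enough because $\Ex[E]\to\infty$, which is exactly the role played by the lower hypothesis $r=\omega(1/n)$. The extra $\epsilon$ in the upper hypothesis is the slack absorbed by the $\log n$ loss from Markov on $C$; a sharper concentration of $C$ (via its variance, using analogous pairwise-independence on the torus) would presumably let one push to $r=o(n^{-1/2})$, but I would not attempt that here since the stated range is more than adequate.
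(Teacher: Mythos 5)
Your proof is correct, and it takes a genuinely different route from the paper's. The paper argues structurally: for $r$ in the stated range it locates an integer constant $c\ge2$ for which w.h.p.\ $\dg$ contains a clique of size $c$ but no connected subgraph of size $c+1$ (by Theorems~\ref{connected-k-thr} and~\ref{clique-k-thr}); such a clique is isolated, and via Corollary~\ref{2-hop-cor} any crossing of a convex-hull edge of the clique would extend it to a connected $(c+1)$-point subgraph, so the hull edges are free. Your argument instead is quantitative and uses only first- and second-moment bounds: from $\#\text{free edges}\ge E-2C$, the pairwise independence of the edge indicators on the torus gives $\Var(E)\le\Ex[E]$ so $E\ge\tfrac12\Ex[E]=\Theta(n^2r^2)$ w.h.p.\ by Chebyshev, Markov gives $C\le(\log n)\Ex[C]=O(n^4r^6\log n)$ w.h.p., and the hypothesis $r=o(n^{-1/2-\epsilon})$ makes $n^2r^4\log n\to0$ so the difference is positive and in fact $\Theta(n^2r^2)$. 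Your version buys a count (almost all edges are free in this range, not merely one), and it applies uniformly to every $r$ in the window rather than requiring $r$ to fall cleanly between two consecutive thresholds $n^{-c/(2c-2)}$ and $n^{-(c+1)/(2c)}$, which is a mild delicacy in the paper's argument. What the paper's argument buys is structural insight into where the free edges live (on the boundary of small isolated cliques), which aligns with the theme of the rest of Section~\ref{free-edge-section}. One small caveat worth noting if you fold this in: Corollary~\ref{cr:cor} is proved in the unit-square model, while Section~\ref{free-edge-section} works on the torus; the order $\Theta(n^4r^6)$ is unchanged on the torus (the calculation is in fact cleaner there), but it deserves a sentence.
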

\begin{proof}
Since $r=\omega\left(\frac{1}{n}\right)$, by Corollary~\ref{clique-cor}, $\dg$ has at least one edge with high probability. 

Let $k\ge 3$ be an integer constant such that $\frac{k}{2k-2}\le\frac{1}{2}+\epsilon$.
Then $r=o(n^{\frac{-k}{2k-1}})$. By Theorem~\ref{connected-k-thr}, $\dg$ has no connected component of size $k$ with high probability. Moreover, there exists an integer constant $c$, where $2\le c < k$, such that $r=\omega(n^{\frac{-c}{2c-2}})$ and $r=o(n^{\frac{-(c+1)}{2(c+1)-2}})$. Theorem~\ref{clique-k-thr} implies that with high probability $\dg$ has a clique of size $c$. Let $H$ be clique. We claim that all the edges of
the boundary of the convex hull of $H$ are free. For contradiction assume otherwise, and take an edge $(a,b)$ from the boundary of the convex hull
of $H$ that is crossed by some other edge $(c,d)$. Note that at least one of $c$ and $d$ is not a vertex of $H$. Let $d$ be a vertex that is not in $H$. By Corollary~\ref{2-hop-cor},  in $\dg$, there is a path (consists of at most two edges) between $d$ and each of $a$ and $b$. This in turn gives a connected subgraph with $c+1$ vertices in $\dg$. However, by Theorem~\ref{connected-k-thr}, such a connected component cannot exists with high probability. This completes the proof of the theorem.
\end{proof}

We refer to an edge $(u,v)$ in $\dg$ as a {\em long edge} if $|uv| \geq \sqrt{\frac{8\ln n}{n}}$, where $|uv|$ denotes the Euclidean distance between $u$ and $v$. For a given region $R$ in the unit square let $\area{R}$ denote the area of $R$. Let $P$ denote the set of $n$ random points.

\begin{lemma}
\label{free-edge-lemma2}
With high probability, none of the long edges of $\dg$ are free.
\end{lemma}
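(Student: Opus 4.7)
The plan is to show that the expected number of long free edges is $o(1)$ and then conclude by Markov's inequality. For each unordered pair of the $n$ random points I will bound, conditionally on their positions, the probability that the pair forms a long edge which is also free, and then sum over all $\binom{n}{2}$ pairs.

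For two points $u,v$ with $\ell := |uv| \in [\sqrt{8\ln n/n},\, r]$, I would pass to local coordinates in which $u=(0,0)$ and $v=(\ell,0)$ (legitimate on the torus, since only the area of a fixed-shape region enters the calculation), split $uv$ into its two halves, and above and below each half place a rectangle of width $\ell/2$ and height $h := \tfrac{1}{2}\sqrt{r^2 - \ell^2/4}$. The key geometric identity is $(\ell/2)^2 + (2h)^2 = r^2$: any $c$ in the upper rectangle of a half and any $d$ in the lower rectangle of the same half satisfy $|cd|\le r$, so they form an edge of $\dg$; moreover, since $c$ lies strictly above $uv$ and $d$ strictly below, with $x$-coordinates inside the corresponding half-interval, the segment $cd$ crosses $uv$. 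Consequently, freeness of $(u,v)$ forces, in each of the two halves, at least one of the flanking rectangles to be empty of the other $n-2$ points.

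Enumerating over the $2^2=4$ assignments of ``which side is empty'' per half and exploiting the pairwise disjointness of the four rectangles (each of area $\ell h/2$) yields
\[
\Pr[(u,v)\text{ is free}\mid u,v]\;\le\; 4\,(1-\ell h)^{n-2}\;\le\; 4\,e^{-(n-2)\ell h}.
\]
Since $\ell\le r$, we have $\sqrt{r^2-\ell^2/4}\ge r\sqrt{3}/2$, hence $\ell h\ge \tfrac{\sqrt{3}}{4}\ell r$. Both $\ell$ and $r$ are at least $\sqrt{8\ln n/n}$, which forces $\ell r\ge 8\ln n/n$ and therefore $(n-2)\ell h\ge 2\sqrt{3}\,\ln n\,(1-o(1))$. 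Thus $\Pr[(u,v)\text{ is free}\mid u,v]\le 4\,n^{-2\sqrt{3}+o(1)}$, and a union bound over the $\binom{n}{2}$ candidate pairs gives
\[
\Pr[\exists\text{ long free edge}]\;\le\; O\!\left(n^{2-2\sqrt{3}+o(1)}\right)\;=\;o(1),
\]
because $2\sqrt{3}>2$.

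The delicate part is the geometric optimization of the rectangles. Using a single full-width box ($K=1$) would have $h$ collapse to $0$ when $r$ is close to $\ell$, while using many narrow boxes would inflate the enumeration prefactor $2^K$. Splitting $uv$ into exactly two halves strikes the right balance, and the specific constant $8$ in the threshold $\sqrt{8\ln n/n}$ defining ``long'' is precisely what makes the exponent $2\sqrt{3}\ln n$ comfortably beat the $2\ln n$ coming from the $\binom{n}{2}$ factor of the union bound.
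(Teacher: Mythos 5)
Your proof is correct. It follows the same overall strategy as the paper's---for each long edge $(u,v)$, exhibit a pair of regions on opposite sides of $uv$ whose simultaneous occupancy forces a crossing edge, bound the probability that the witness regions are empty via $(1-A)^{n-2}\le e^{-(n-2)A}$, and take a union bound over the $O(n^2)$ candidate pairs---but with a different choice of witness region. Where you place four $\tfrac{\ell}{2}\times\tfrac{1}{2}\sqrt{r^2-\ell^2/4}$ rectangles and enumerate over the $4$ emptiness patterns, the paper simply takes the closed disk with diameter $uv$ and its two half-disks $H_1,H_2$: any $c\in H_1$, $d\in H_2$ satisfies $|cd|\le|uv|\le r$ automatically (the diameter of the disk is $|uv|$), and the segment $cd$ lies in the disk and hence meets the line of $uv$ on the segment $uv$ itself. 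This bypasses the optimization you carry out over $K$; no split into halves is needed, precisely because the disk's diameter already caps $|cd|$. The resulting computation is a one-liner: each half-disk has area $\pi d^2/8\ge\pi\ln n/n$, so emptiness has probability $\le n^{-3}$, and a union bound over at most $n^2$ half-disks gives failure probability $\le 1/n$. Your exponent $2\sqrt{3}\approx 3.46$ versus the paper's $\pi\approx 3.14$ (both of which beat the needed $2$) reflects that your rectangles are leaner than the half-disks, but the extra bookkeeping ($2^K$ prefactor, the $h$ formula, the $\ell\le r$ reduction) is not worth the slightly better constant. Both arguments are sound.
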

\begin{proof} 
Let $L$ be the set consisting of all the long edges of $\dg$. Note that $L$ contains at most $n\choose 2$ edges. Take any edge $(u,v)$ in $L$. Consider the closed disk that has $(u,v)$ as a diameter. Let $H_1(u,v)$ and $H_2(u,v)$ be the two half-disks on both sides of $(u,v)$. We say that $H_1(u,v)$ (resp. $H_2(u,v)$) is {\em empty} if it does not contain any point of $P\setminus\{u,v\}$. Observe that if both $H_1(u,v)$ and $H_2(u,v)$ are non-empty, then $(u,v)$ is not a free edge.
We are going to show that with high probability both $H_1(u,v)$ and $H_2(u,v)$ are non-empty.
 
Let $\mathcal{H}$ be the set of all half-disks defined by the edges in $L$. Let $H$ be any half-disk in $\mathcal{H}$, and let $d$ be the diameter of $H$. Then, $\area{H}=\frac{\pi d^2}{8}$. Thus, with probability $\left(1 - \frac{\pi d^2}{8}\right)^{n-2}$, $H$ is empty.
Since $d\ge \sqrt{8\ln n/n}$, we have 
$$\left(1 - \frac{\pi d^2}{8}\right)^{n-2} \le \left(1 - \frac{\pi \ln n}{n}\right)^{n-2} 
\le e^{\frac{-\pi(n-2)\ln n}{n}}\le e^{-3\ln n}=\frac{1}{n^3},$$
 where the last inequality is valid because $\pi(n-2)>3n$ as $n\to \infty$. Therefore, with probability at most $\frac{1}{n^3}$, $H$ is empty. Since $\mathcal{H}$ contains at most $n^2$ half-disks, the probability that $\mathcal{H}$ contains an empty half-disk is at most $n^2\cdot\frac{1}{n^3}=\frac{1}{n}$. Thus, with probability at least $1-\frac{1}{n}$ all half-disks in $\mathcal{H}$ are non-empty. Therefore, both $H_1(u,v)$ and $H_2(u,v)$ contain points from $P\setminus\{u,v\}$. This produces an edge in $\dg$ that crosses $(u,v)$. Thus, with high probability all the long edges in $\dg$ are crossed.
\end{proof}

In the following lemma we prove a result that is stronger than the result of Lemma~\ref{free-edge-lemma2}.
\begin{lemma}
\label{free-edge-lemma3}
If
$r = \Omega \left( \frac{1}{\sqrt{s}}\left(\frac{\ln n}{n}\right)^{3/4} \right)$,
then $s=\Omega\left(\left(\frac{\ln n}{n}\right)^{3/2}\right)$, and with high probability, every edge of $\dg$ of length at least $s$ is not free.
\end{lemma}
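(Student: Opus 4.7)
The first conclusion is immediate: since $r = r(n) = o(1)$, the hypothesis forces $s^{-1/2}(\ln n/n)^{3/4} = O(r) = O(1)$, which rearranges to $s = \Omega((\ln n/n)^{3/2})$.

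For the main claim, the plan is a union bound over candidate pairs in $P$. Fix two points $u, v$ with $|uv| = \ell \in [s, r]$. Placing (by translation and rotation) $u = (0, 0)$ and $v = (\ell, 0)$, I would introduce the rectangle $R^+ = [0, \ell] \times (0, \sqrt{r^2 - \ell^2}/2]$ just above $uv$ and its reflection $R^-$ below. By choice of dimensions, any pair $a \in R^+$, $b \in R^-$ satisfies $|ab|^2 \le \ell^2 + (r^2 - \ell^2) = r^2$, so $(a, b)$ is an edge of $\dg$; moreover both $a_x$ and $b_x$ lie in $[0, \ell]$, so the segment $ab$ crosses $uv$. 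Hence if both $R^+$ and $R^-$ contain a point of $P \setminus \{u, v\}$, the edge $(u, v)$ is not free.

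Each rectangle has area $\area{R^\pm} = \ell\sqrt{r^2 - \ell^2}/2$, so
\[
\Pr[(u,v)\text{ free}\mid |uv|=\ell] \;\le\; 2\bigl(1 - \tfrac{\ell\sqrt{r^2-\ell^2}}{2}\bigr)^{n-2} \;\le\; 2\exp\!\bigl(-\tfrac{n\ell\sqrt{r^2-\ell^2}}{4}\bigr).
\]
For $\ell$ up to $r/\sqrt{2}$ the exponent is at least $n\ell r/(4\sqrt{2})$; for larger $\ell$ the Lemma~\ref{free-edge-lemma2} half-disk bound $2\exp(-n\pi\ell^2/8)$ is at least as strong. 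Taking a union bound over the $\binom{n}{2}$ candidate pairs (and integrating against the edge-length density on the torus), I would verify that the hypothesis $r \ge c\,s^{-1/2}(\ln n/n)^{3/4}$ makes $nsr$ exceed a sufficiently large multiple of $\ln n$, yielding the desired $o(1)$ bound.

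The hard part is the regime of very small $s$, close to $(\ln n/n)^{3/2}$. There the simple rectangle exponent $nsr \ge c\,n^{1/4}s^{1/2}(\ln n)^{3/4}$ need not exceed $\ln n$, so the single-region estimate is too weak to survive the union bound. To handle this regime I expect one must replace the rectangle argument by a second-moment (or Janson-type) estimate on the random variable $X$ counting crossing edges of $(u,v)$. A direct integration shows $\Pr[ab\text{ is a crossing edge of }uv] = \Theta(\ell r^3)$ for uniform $a,b$ on the torus, giving $\Ex[X] = \Theta(n^2 \ell r^3) = \Omega((\ln n)^2)$ under the hypothesis; the dominant correlation comes from triples $\{i,j\},\{i,k\}$ sharing a vertex and contributes $O(n^3\ell r^5)$ to $\sum\Ex[Y_{ij}Y_{kl}]$. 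Balancing these terms (using Chebyshev or the extended Janson inequality) is what must close the gap down to $s = \Omega((\ln n/n)^{3/2})$; this concentration step is the crux of the argument.
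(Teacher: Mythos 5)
There is a genuine gap, and you have correctly localized it yourself: when $s$ is near the lower limit $(\ln n/n)^{3/2}$, the rectangle $R^\pm$ has area on the order of $sr$, which under the hypothesis is only $\Theta\bigl(s^{1/2}(\ln n/n)^{3/4}\bigr) = \Theta\bigl((\ln n/n)^{3/2}\bigr)$, far below the $\Theta(\ln n/n)$ needed for the $n^{-3}$ empty-region probability that survives a union bound over $\binom{n}{2}$ pairs. (Your rectangle also degenerates as $\ell\to r$, where $\sqrt{r^2-\ell^2}\to 0$, and Lemma~\ref{free-edge-lemma2} only covers $\ell\ge\sqrt{8\ln n/n}$, which need not include that range.) The second-moment / Janson sketch you offer as a fallback is not carried out, and it would have to deliver an exponentially small $\Pr[X=0]$ uniformly over all $O(n^2)$ edges, which your $\Ex[X]=\Omega((\ln n)^2)$ estimate does not obviously provide once correlations are accounted for; so the crux you flag is indeed missing.

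The paper's resolution is a different, purely geometric two-stage construction, not a concentration argument. The key point you are missing is to decouple the area of the target region from the edge length $\ell$ by routing the crossing through a \emph{pivot point}. One first plants a region $V_1$ of area $\pi\ln n/n$ on one side of $(u,v)$ (a sector of radius $\Theta(\sqrt{\ln n/n})$ centered at the midpoint), which w.h.p.\ contains a point $w$. Then the crossing partner must lie in the wedge $U$ seen from $w$ through the segment, truncated to radius $r$: $\area{U}=\Theta(\alpha r^2)$ where $\alpha=\angle uwv=\Theta(s/|wu|)=\Theta\bigl(s\sqrt{n/\ln n}\bigr)$, so $\area{U}=\Theta\bigl(sr^2\sqrt{n/\ln n}\bigr)$. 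Here $r^2$ appears rather than $r$, which is exactly what makes the hypothesis $r=\Omega\bigl(s^{-1/2}(\ln n/n)^{3/4}\bigr)$ the right calibration: it gives $\area{U}=\Omega(\ln n/n)$ uniformly down to $s=\Theta((\ln n/n)^{3/2})$, after which a standard union bound over $O(n^2)$ regions closes the argument. Your rectangle construction caps the useful region at radius $\Theta(\ell)$ in the direction along $(u,v)$, which throws away the extra factor of $r$ that the wedge exploits.
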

\begin{proof} 
Since $r\le 1$, we have $\sqrt{s}>\left(\frac{c\ln n}{n}\right)^{3/4}$ for some constant $c>0$. This implies the first part of the claim, that is $s=\Omega\left(\left(\frac{\ln n}{n}\right)^{3/2}\right)$.

Now we show that every edge of length at least $s$ is crossed. Take any edge $(u, v)$ of length at least $s$ from $\dg$. If $|uv|\geq\sqrt{8\ln n/n}$, then $(u,v)$ is a long edge, and by Lemma~\ref{free-edge-lemma2}, $(u,v)$ is not free with high probability. 

Assume $|uv|<\sqrt{8\ln n/n}$, and thus, $s<\sqrt{8\ln n/n}$. This implies that $r\ge c'\sqrt{\ln n/n}$ for some constant $c'>0$. In order to show that $(u,v)$ is crossed, we are going to define two regions, $V_1$ and $U$, on opposite sides of $(u,v)$ such that the area of each region is least $\frac{\pi\ln n}{n}$. Then we show that each of $V_1$ and $U$ contains a point of $P$ such that these two points are connected by an edge that crosses $(u,v)$.

Without loss of generality assume that $(u,v)$ is vertical, and let $x$ be its midpoint. Let $V_1$ be a sector of angle $\frac{\pi}{3}$ to the left of oriented edge $(u,v)$ that has radius $\sqrt{6 \ln n/n}$ and is centered at $x$ such that its sides make angle $\frac{\pi}{3}$ with $(u,v)$; see Figure~\ref{fig:radius}. Note that $\area{V_1}=\frac{\pi\ln n}{n}$. As in the proof of Lemma~\ref{free-edge-lemma2}, with high probability $V_1$ contains a point of $P$. Let $w$ be such a point.
\begin{figure}[htb]
\begin{center}
\includegraphics[width=0.5\textwidth]{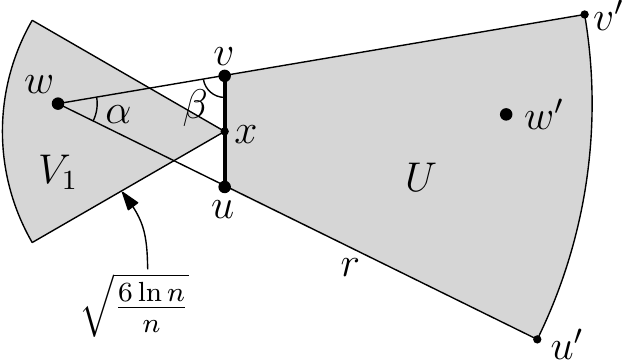}
\end{center}
\caption{Illustration of Lemma~\ref{free-edge-lemma3}. Each of the shaded regions has area at least $\frac{\pi \ln n}{n}$.}
\label{fig:radius}
\end{figure}

Extend the line segments $wu$ and $wv$ to $u'$ and $v'$,
respectively, such that $|wu'|=|wv'|=r$. Let $V_2$ be the sector of radius $r$ that is centered at $w$ and has $wu'$ and $wv'$ as its sides. Let $U$ be the subset of $V_2$ that is to the right of oriented edge $(u,v)$. Note that if $U$ contains a point $w'\in P$, then $(w,w')$ is an edge of $\dg$ and crosses $(u,v)$; see Figure~\ref{fig:radius}.

We would like the radius $r$ to be sufficiently large such that the area of $U$ be at least $\frac{\pi\ln n}{n}$. Without loss of generality assume that $w$ is closer to $v$ than to $u$, i.e., $|wu|\ge |wv|$. 
Since $|uv|<\sqrt{8\ln n/n}$, we have $|ux|=|vx|< \sqrt{2\ln n/n}$. Moreover $|wu|< |wx|+|ux|< \sqrt{6\ln n/n}+\sqrt{2\ln n/n}<4\sqrt{\ln n/n}$. By picking $c'\ge 8$ we have $r\ge 8\sqrt{\ln n/n}$; this makes sure that $r> 2 |wu|$, and subsequently $r> 2|wv|$. Thus, $\area{U}> \frac{1}{2}\area{V_2}$. Let $\alpha=\angle uwv$. Then, $\area{V_2}=\frac{\alpha r^2}{2}$ and $\area{U}>\frac{\alpha r^2}{4}$. 
If $\alpha\geq \frac{\pi}{3}$, then $$\area{U}> \frac{\pi r^2}{12}\ge\frac{64 \pi \ln n}{12n}>\frac{\pi\ln n}{n}.$$

Assume $\alpha< \frac{\pi}{3}$. Since $s>0$, we have $\alpha>0$. Let $\beta=\angle wvu$. Observe that $\frac{\pi}{3}<\beta<\frac{2\pi}{3}$. By the law of sines we have $$\sin \alpha=\sin\beta\cdot\frac{|uv|}{|wu|} >\frac{\sqrt{3}}{2}\cdot\frac{|uv|}{|wu|}\ge\frac{\sqrt{3}}{2}\cdot\frac{s}{|wu|}.$$ 
Note that $\sin\alpha<\alpha$, for all $0<\alpha< \frac{\pi}{3}$. Thus, $$\alpha >\frac{\sqrt{3}}{2}\cdot\frac{s}{|wu|},$$
and hence $$\area{U}>\frac{\alpha r^2}{4}>\frac{\sqrt{3}}{8}\cdot\frac{sr^2}{|wu|} >\frac{\sqrt{3}}{32}\cdot\frac{sr^2}{\sqrt{\ln n/n}}.$$

To ensure $\area{U}\ge\frac{\pi\ln n}{n}$,
it is enough to guarantee the following inequality
\begin{equation}
 \label{small-s}
\frac{\sqrt{3}}{32}\cdot\frac{sr^2}{\sqrt{\ln n/n}}  \geq 
\frac{\pi\ln n}{n}.
\end{equation}

However, Inequality~\eqref{small-s} is valid if 
$$
r \geq \frac{c}{\sqrt{s}} \cdot \left( \frac{\ln n}{n} \right)^{\frac{3}{4}},
$$
for some constant $c \ge 8$. This is ensured by the statement of the lemma.

To this end, for each edge $(u,v)$ in $\dg$ of length at least $s$, we have two regions $V_1$ and $U$ such that each of them has area at least $\frac{\pi\ln n}{n}$.
Let $\mathcal{R}$ be the set of all such regions for all edges of $\dg$ whose lengths are at least $s$. Note that $\mathcal{R}$ contains at most $n^2$ regions. As in the proof of Lemma~\ref{free-edge-lemma2}, the probability that a region $R\in \mathcal{R}$ does not contain any point of $P$ is at most $\frac{1}{n^3}$, and thus, the probability that there exists an empty region in $\mathcal{R}$ is at most $n^2\cdot\frac{1}{n^3}=\frac{1}{n}$. Thus, with probability at least $1-\frac{1}{n}$ all regions in $\mathcal{R}$ have a point of $P$. Therefore, both $V_1$ and $U$ contain points from $P$. This produces an edge in $\dg$ that crosses $(u,v)$. Thus, with high probability all the edges of $\dg$ of length at least $s$ are crossed.
\end{proof}

\begin{theorem}
\label{free-edge-thr}
Let $\dg$ be a random geometric graph.
\begin{enumerate}
\item
If $r \ll\frac{1}{n}$ then w.h.p. $\dg$ has no edge.
\item
If $\frac1n \ll r \ll \frac1{n^{2/3}}$ then w.h.p. every edge of $\dg$ is free.
\item
If $\frac{1}{n^{2/3}} \ll r \ll \frac{1}{n^{1/2+\epsilon}}$, for any constant $\epsilon>0$, w.h.p. $\dg$ has a free edge.
\item
If $r \gg \frac{(\ln n)^{3/4}}{n^{1/4}}$, then w.h.p. $\dg$ has no free edge.
\end{enumerate}
\end{theorem}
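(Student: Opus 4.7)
The plan is to dispatch each of the four parts using results already in place. Part~(1) is immediate: by Corollary~\ref{clique-cor}, $n^{-1}$ is a threshold for $\dg$ to have an edge, and so $r = o(1/n)$ implies that $\dg$ has no edge w.h.p. Part~(2) combines Corollary~\ref{clique-cor} (which under $r = \omega(1/n)$ produces an edge w.h.p.) with Theorem~\ref{plane-thr} (which under $r = o(n^{-2/3})$ says $\dg$ is plane w.h.p.); since a plane graph has no crossings at all, every edge is trivially free. Part~(3) is exactly Lemma~\ref{free-edge-lemma1}, because the hypothesis $r = \omega(n^{-2/3})$ is stronger than the $r = \omega(1/n)$ required there, so no new work is needed.

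Part~(4) is where the main effort lies. The strategy is to apply Lemma~\ref{free-edge-lemma3} at a length threshold $s$ chosen so that $r$ is large enough for the lemma to cover every edge of length at least $s$, while $s$ is small enough that w.h.p.\ no pair of random points lies within distance $s$. Setting
\[
s \;=\; \frac{C(\ln n/n)^{3/2}}{r^2}
\]
for a sufficiently large constant $C$ satisfies the hypothesis $r = \Omega((1/\sqrt{s})(\ln n/n)^{3/4})$ of Lemma~\ref{free-edge-lemma3}, so w.h.p.\ every edge of $\dg$ of length at least $s$ is crossed, hence not free.

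To rule out edges shorter than $s$, I would bound the expected number of point pairs at distance at most $s$ by $\binom{n}{2}\pi s^2 = O(n^2 (\ln n)^3/(n^3 r^4)) = O((\ln n)^3/(n r^4))$. The assumption $r = \omega((\ln n)^{3/4}/n^{1/4})$ gives $n r^4 = \omega((\ln n)^3)$, so this expectation tends to $0$, and Markov's inequality yields that w.h.p.\ no pair of points lies within distance $s$. Combined with the previous step, w.h.p.\ every edge of $\dg$ is crossed, so $\dg$ has no free edge.

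The main subtlety is the parameter matching in Part~(4): the requirement that $s$ be large enough for Lemma~\ref{free-edge-lemma3} to cover all edges pulls against the requirement that $s$ be small enough for the first-moment bound to exclude short edges, and the scaling $r = \omega((\ln n)^{3/4}/n^{1/4})$ is precisely what makes both feasible simultaneously.
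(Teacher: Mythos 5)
Parts (1)--(3) of your proof are exactly the paper's. For Part (4), both you and the paper route the argument through Lemma~\ref{free-edge-lemma3}: pick a cutoff $s$, show w.h.p.\ that no pair of points lies within distance $s$, and apply the lemma to the remaining (longer) edges. The difference is in the choice of $s$. The paper sets $s=\frac{1}{n}$ and asserts that w.h.p.\ all edges have length $\Omega(\frac1n)$. That assertion is actually problematic: the expected number of point pairs within distance $\frac1n$ is $\binom{n}{2}\pi\left(\frac1n\right)^2=\Theta(1)$, so there is an edge shorter than $\frac1n$ with probability bounded away from zero. Your choice $s=\frac{C(\ln n/n)^{3/2}}{r^2}$ repairs this. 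When $r\gg\frac{(\ln n)^{3/4}}{n^{1/4}}$ one has $r^2\gg(\ln n/n)^{3/2}\cdot n^{1/2}/(\ln n)^{3/2}$, hence $s=o(\frac1n)$, and the expected number of pairs within distance $s$ is $O(n^2s^2)=O\!\left(\frac{(\ln n)^3}{n r^4}\right)=o(1)$; meanwhile $\frac{1}{\sqrt{s}}\left(\frac{\ln n}{n}\right)^{3/4}=\frac{r}{\sqrt{C}}$, so the hypothesis of Lemma~\ref{free-edge-lemma3} is satisfied once $C$ is a large enough constant. A union bound then gives that w.h.p.\ every edge is crossed. So your proof is correct, uses the same key lemma, and in Part (4) is tighter than the paper's own argument because you calibrate $s$ to $r$ instead of fixing $s=\frac1n$.
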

\begin{proof}
The first part is immediate from Corollary~\ref{clique-cor}.
For the second part, in this range there exists an edge by Corollary~\ref{clique-cor}, and there is no crossing by Theorem~\ref{plane-thr}.  
The third part has been proved in Lemma~\ref{free-edge-lemma1}.
It remains to prove the last part.
By Corollary~\ref{clique-cor}, with high probability no edge of the $\dg$ can have length $o(\frac{1}{n})$. Thus, with high probability all edges of $\dg$ have length $\Omega(\frac{1}{n})$. Therefore, the result
follows from Lemma~\ref{free-edge-lemma3}
by setting $s = \frac{1}{n}$. Note that for large values of $n$ we have $\frac{1}{n}= \Omega\left(\left(\frac{\ln n}{n}\right)^{3/2}\right)$, which is required in Lemma~\ref{free-edge-lemma3}.
\end{proof}

\section{Thresholds for having an independent set of size $k$}
\label{is-section}
In this section we investigate thresholds for a random geometric graph to have an independent set of a given size; this
is a decreasing property. Throughout this section
we assume that points are chosen on a torus, i.e., a unit square with wraparound.
Note that if $\dg$ has no edge then it has an independent set of size $n$. Moreover, if $\dg$ has at least one edge, then it has no independent set of size $n$. By Corollary~\ref{clique-cor}, $n^{-1}$ is a radius threshold for $\dg$ to have an edge. Therefore, $n^{-1}$ is a radius threshold for $\dg$ to have an independent set of size $n$.
\begin{theorem}
\label{is-thr1}
Let $k\geq \ln n$ be an integer. If $r>\sqrt{\frac{c\ln(\frac{en}{k})}{k}}$, for some constant $c>0$, then $\dg$ has no independent set of size $k$ with high probability.
\end{theorem}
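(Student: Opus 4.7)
The plan is to apply the first moment method to the count $Y$ of $k$-subsets of $\dg$ that form independent sets. Since I only need $\Pr[Y\ge 1]\to 0$, it will suffice, by Markov's inequality, to show $\Ex[Y]\to 0$. By linearity and the i.i.d.\ uniformity of the points, $\Ex[Y]=\binom{n}{k}\,p_k$, where $p_k$ is the probability that $k$ independent uniform points on the unit-area torus are pairwise at Euclidean distance greater than $r$. I would use the standard estimate $\binom{n}{k}\le(en/k)^k$, reducing the whole problem to a bound on $p_k$ of the form $p_k\le\exp(-\Omega(r^2k^2))$.

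The geometric heart of the proof is the bound on $p_k$, which I would obtain by sequential exposure plus a disk-packing argument. Expose the points $p_1,\dots,p_k$ one at a time, and let $A_i$ be the event ``$p_i$ lies at distance greater than $r$ from each of $p_1,\dots,p_{i-1}$,'' so the independence event is $A_2\cap\cdots\cap A_k$. Whenever $A_2,\dots,A_{i-1}$ hold, the open disks $D(p_j,r/2)$ for $j<i$ have pairwise disjoint interiors (their centers lie at pairwise distance $>r$), and because $r\to 0$ each such disk fits inside the fundamental domain and has area $\pi r^2/4$. Consequently
\[
\Bigl|\,\textstyle\bigcup_{j<i}D(p_j,r)\Bigr|\ \ge\ \Bigl|\,\textstyle\bigcup_{j<i}D(p_j,r/2)\Bigr|\ =\ (i-1)\pi r^2/4,
\]
so $\Pr[A_i\mid p_1,\dots,p_{i-1}]\le 1-(i-1)\pi r^2/4$ holds pointwise on $A_2\cap\cdots\cap A_{i-1}$, and therefore $\Pr[A_i\mid A_2,\dots,A_{i-1}]\le 1-(i-1)\pi r^2/4$ as well. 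Multiplying and using $1-x\le e^{-x}$,
\[
p_k\ \le\ \exp\!\left(-\frac{\pi r^2 k(k-1)}{8}\right)\ \le\ \exp\!\left(-\frac{\pi r^2 k^2}{16}\right)
\]
for $k\ge 2$. If the right side ever becomes negative, i.e.\ $(k-1)\pi r^2/4>1$, the packing argument already forces $p_k=0$ and the conclusion is immediate.

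Combining the two ingredients,
\[
\ln\Ex[Y]\ \le\ k\ln(en/k)\ -\ \frac{\pi r^2 k^2}{16}.
\]
Plugging in $r^2>c\ln(en/k)/k$ and choosing $c>16/\pi$ gives $\ln\Ex[Y]\le -\bigl(\tfrac{\pi c}{16}-1\bigr)\,k\ln(en/k)$. Since $k\le n$ yields $\ln(en/k)\ge 1$, the hypothesis $k\ge\ln n$ forces $k\ln(en/k)\ge\ln n\to\infty$, so $\Ex[Y]\to 0$ and the theorem follows. The step I expect to be the main obstacle is the packing bound: I need the lower bound on the union area to be legitimate on the torus and to hold \emph{pointwise} (not merely on average) for every configuration satisfying $A_2\cap\cdots\cap A_{i-1}$, so that it propagates cleanly through the product of conditional probabilities. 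Once this is in place, Markov's inequality, the binomial estimate, and the substitution of the hypothesis are all routine.
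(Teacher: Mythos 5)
Your proof is correct and follows essentially the same route as the paper's: a first moment bound $\Ex[Y]\le\binom{n}{k}p_k\le(en/k)^k p_k$, sequential exposure of the $k$ points with events $A_i$, a disk-packing argument giving $\Pr[A_i\mid A_2,\ldots,A_{i-1}]\le 1-(i-1)\pi r^2/4\le e^{-(i-1)\pi r^2/4}$, then a telescoping product to get $p_k\le e^{-\pi r^2k^2/16}$ and a plug-in of $r$ with $c>16/\pi$. The only cosmetic difference is that you state the pointwise-on-the-configuration area bound explicitly before averaging over the conditioning event, and you spell out why $k\ge\ln n$ makes $k\ln(en/k)\to\infty$, details the paper leaves implicit; both yield the same conclusion.
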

\begin{proof} 
Let $P_1, \dots, P_{n\choose k}$ be an enumeration of all subsets of $k$ points in $\dg$. Let $X_i$ be a random variable such that

\[ X_i =
  \begin{cases}
    1       & \quad \text{if } P_i\text{ is an independent set,}\\
    0     & \quad \text{otherwise.}\\
  \end{cases}
\]
Let the random variable $X$ counts the number of independent sets $P_i$. Thus, $X=\sum_{i=1}^{n\choose k}X_i.$
Since $X_i$'s have identical distributions, we have
\begin{equation}
 \label{is-E-X-X1}
\Ex[X]={n \choose k}\Ex[X_1].
\end{equation}

Observe that $\Ex[X_1]=\Pr[X_1=1]$. 
In order to compute/estimate the probability of $P_1$ being an independent set we observe the following. Let $P_1=\{p_1,\dots,p_k\}$. For each point $p_i\in P_1$, let $D(p_i,r)$ and $D(p_i,r/2)$ be the two disks of radius $r$ and $r/2$, respectively, which are centered at $p_i$. For each $i=2,\dots, k$, let $A_i$ be the event that $D(p_i, r/2)$ is disjoint from all the disks $D(p_j,r/2)$, for $j=1,\dots, i-1$, i.e.,
$$D(p_i,r/2)\cap D(p_j,r/2)=\emptyset, \quad\text{ for all } j=1,\dots, i-1.$$

See Figure~\ref{fig:is-disjoint1}. Observe that if $P_1$ is an independent set, then all the disks $D(p_j,r/2)$, with $j\in\{1,\dots, k\}$ are pairwise disjoint. Therefore, 
$$\text{if } P_1 \text{ is an independent set, then }(A_1\land A_2\land\dots\land A_k).$$ 

Thus, we have

\begin{align}
\Pr[X_1=1]&\le\notag \Pr[A_1\land \dots\land A_k]\\
&= \Pr[A_1]\label{is-X1-Ak}\cdot \Pr[A_2| A_1]\cdot \Pr[A_3| A_2\land A_1] \cdots \Pr [A_k| A_{k-1}\land\dots\land A_{1}].
\end{align}

\begin{figure}[H]
  \centering
\includegraphics[width=.5\columnwidth]{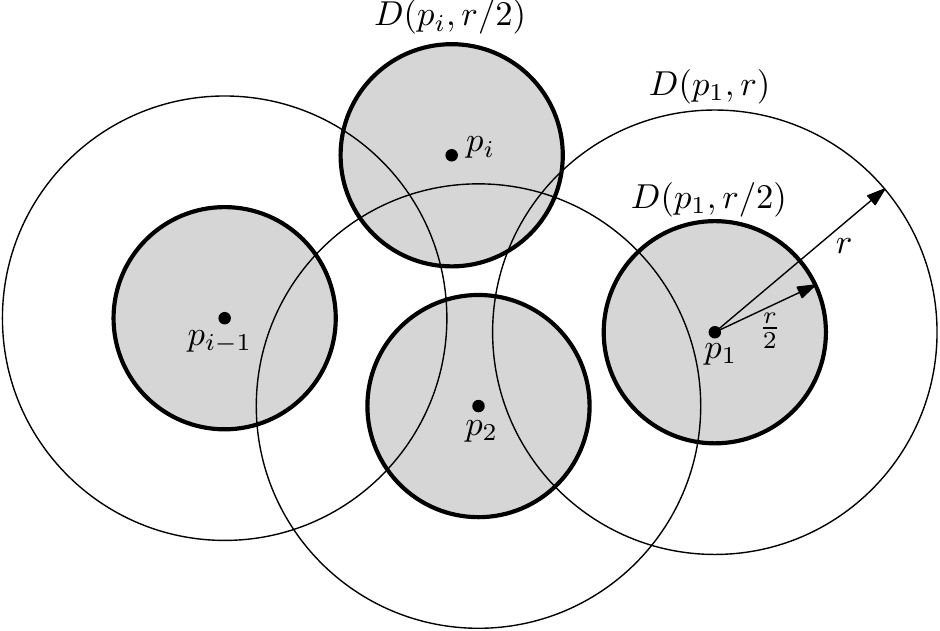}
\caption{$p_i$ should be outside $D(p_j,r)$ for all $j=1,\dots, i-1$.}
\label{fig:is-disjoint1}
\end{figure}

Note that $\Pr[A_1]=1$, and we are going to compute $\Pr[A_i|A_{i-1}\land\dots\land A_1]$ for $i=2, \dots, k$.

In order for $D(p_i, r/2)$ to be disjoint from $\bigcup_{j=1}^{i-1}D(p_j,r/2)$, $p_i$ should lie outside $\bigcup_{j=1}^{i-1}D(p_j,r)$. 
\begin{align}
\Pr[A_i|A_{i-1}\land\dots\land A_1]&=\notag\Pr\left[p_i\notin \bigcup_{j=1}^{i-1}D(p_j,r) \given[\Big] A_{i-1}\land\dots\land A_1\right]\\
 &= \notag1-  \Pr\left[p_i\in \bigcup_{j=1}^{i-1}D(p_j,r) \given[\Big] A_{i-1}\land\dots\land A_1\right]\\
&\le\notag  1-\Pr\left[p_i\in \bigcup_{j=1}^{i-1}D(p_j,r/2)\given[\Big] A_{i-1}\land\dots\land A_1\right]\\
&=\notag 1-(i-1) \pi (r/2)^2\\
&\le\label{is-Ai} e^{-\frac{(i-1) \pi r^2}{4}}.
\end{align}
By Inequalities~\eqref{is-X1-Ak} and~\eqref{is-Ai} we have
\begin{align}
\Ex[X_1]&\le \notag \Pr[A_1\land\dots\land A_k]\\
&\le\notag 1\cdot e^{-\frac{\pi r^2}{4}}\cdot e^{-\frac{2 \pi r^2}{4}}\dots e^{-\frac{(k-1)\pi r^2}{4}}\\
&=\notag e^{-\frac{\pi r^2}{4}(1+2+\dots+(k-1))}\\
& = \notag e^{-\frac{\pi r^2}{8} k(k-1)}\\
&\le\notag e^{-\frac{\pi r^2k^2}{16}},
\end{align}
where the last inequality is valid because $k-1>\frac{k}{2}$.
This and Equality~\eqref{is-E-X-X1} imply that 
$$\Ex[X]\le {n \choose k}\cdot e^{-\frac{\pi r^2k^2}{16}}\le
\left(\frac{en}{k}\right)^k\cdot e^{-\frac{\pi r^2k^2}{16}}.$$

Note that $r>\sqrt{\frac{c\ln(\frac{en}{k})}{k}}$. By Markov's Inequality we have:
\begin{align*}
\Pr[X\geq 1]
&\leq E[X] \\
& \leq \left(\frac{en}{k}\right)^k e^{-\frac{\pi r^2k^2}{16}}\\
& \leq \left(\frac{en}{k}\right)^k e^{- \frac{\pi c k}{16} \ln(\frac{en}{k})}\\
&= \left(\frac{en}{k}\right)^k \left(\frac{en}{k}\right)^{- \frac{\pi c k}{16}}\\
&= \left(\frac{en}{k}\right)^{k(1- \frac{\pi c}{16})},
\end{align*}
which tends to zero if $c>\frac{16}{\pi}$ and $n\to \infty$. This implies that with high probability $X=0$, and hence $\dg$ has no independent set of size $k$.
\end{proof}

\begin{figure}[htb]
  \centering
\includegraphics[width=.25\columnwidth]{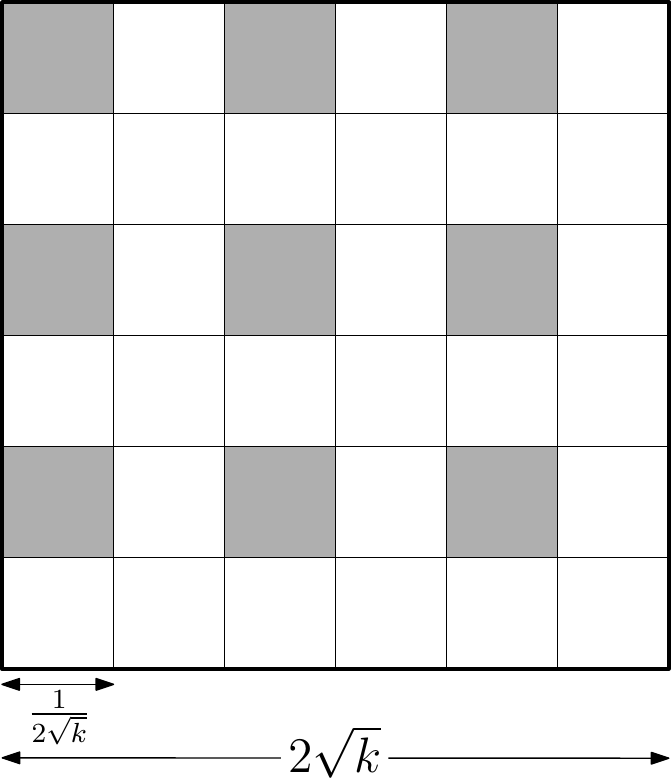}
\caption{The shaded squares belong to $S$.}
\label{fig:is-partition1}
\end{figure}

\begin{theorem}
\label{is-thr2}
Let $k\leq \frac{n}{4\ln n}$ be an integer. If $r \leq \frac{1}{2\sqrt{k}}$, then $\dg$ has an independent set of size $k$ with high probability.
\end{theorem}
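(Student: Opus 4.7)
The idea is to exhibit an independent set of size $k$ by a direct pocket-carving argument; the torus assumption removes any boundary issues. First I would place $k$ axis-aligned squares (``pockets'') of side $\tfrac{1}{2\sqrt{k}}$ on the torus, centered at the $\sqrt{k}\times\sqrt{k}$ lattice of points $\bigl(\tfrac{i}{\sqrt{k}},\tfrac{j}{\sqrt{k}}\bigr)$ for $0\le i,j<\sqrt{k}$. Two horizontally (or vertically) neighboring pockets are separated by a gap of exactly $\tfrac{1}{\sqrt{k}}-\tfrac{1}{2\sqrt{k}}=\tfrac{1}{2\sqrt{k}}$, and this is the minimum distance between any two points drawn from distinct pockets. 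Because $r\le\tfrac{1}{2\sqrt{k}}$, any two such points lie at distance at least $r$; and since the random points have a continuous joint density, the event that some pair has distance exactly $r$ has probability $0$, so almost surely this distance is strictly greater than $r$. Hence picking one random point from each pocket automatically produces an independent set of size $k$ in $\dg$, provided every pocket is non-empty.

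The second step is to verify that with high probability every pocket contains at least one of the $n$ random points. Each pocket has area $\tfrac{1}{4k}$, so the probability that a specific pocket is empty is at most $\bigl(1-\tfrac{1}{4k}\bigr)^{n}\le e^{-n/(4k)}$. The hypothesis $k\le\tfrac{n}{4\ln n}$ gives $\tfrac{n}{4k}\ge\ln n$, so this probability is at most $\tfrac{1}{n}$. A union bound over the $k$ pockets yields
\[
\Pr\bigl[\text{some pocket is empty}\bigr]\;\le\;\frac{k}{n}\;\le\;\frac{1}{4\ln n}\;=\;o(1),
\]
completing the argument.

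\textbf{Main technical point.} The constants $\tfrac{1}{2}$ in the radius bound and $\tfrac{1}{4}$ in the bound on $k$ are coupled tightly by this calculation: pockets of side exactly $\tfrac{1}{2\sqrt{k}}$ are essentially the unique choice that simultaneously gives (i) area $\tfrac{1}{4k}$ large enough for the union bound to succeed under $k\le\tfrac{n}{4\ln n}$, and (ii) separation $\tfrac{1}{2\sqrt{k}}$ just barely accommodating $r\le\tfrac{1}{2\sqrt{k}}$; handling the equality case in the radius bound requires the continuity observation above. Apart from these bookkeeping details the argument is a routine first-moment / union-bound computation, and no second-moment calculation (as in Theorem~\ref{connected-k-thr}) appears necessary because we are only asking for one well-placed independent set rather than counting them.
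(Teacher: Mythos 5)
Your proof is correct and follows essentially the same strategy as the paper's: carve out $k$ squares of side $\tfrac{1}{2\sqrt{k}}$ pairwise separated by gaps of exactly $\tfrac{1}{2\sqrt{k}}$ (the paper realizes this by partitioning into $4k$ sub-squares and keeping every second row and column, which is the same geometry up to a translation), bound the probability a given square is empty by $e^{-n/(4k)}\le 1/n$, and union-bound over the $k$ squares to get failure probability at most $k/n\le\tfrac{1}{4\ln n}$. Your extra continuity observation to dispose of the boundary case $r=\tfrac{1}{2\sqrt{k}}$ is a sensible precaution that the paper leaves implicit; otherwise the two arguments coincide.
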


\begin{proof}
Partition the unit square into sub-squares each of side length $\frac{1}{2\sqrt{k}}$. Note that this partition contains $4k$ sub-squares. Let $S$ be the set of $k$ sub-squares selected from every second column and every second row. See Figure~\ref{fig:is-partition1}. The probability that no point lies in a specific sub-square is $(1 - \frac{1}{4k})^n$.
Thus, the probability that there exists an empty sub-square
in $S$ is at most
$$
k \left(1 - \frac{1}{4k} \right)^n
\leq k e^{\frac{-n}{4k}} \leq \frac{1}{4\ln n},
$$
where the second inequality is valid because $k\leq \frac{n}{4\ln n}$. 
Therefore, with probability at least $1-\frac{1}{4\ln n}$ all sub-squares in $S$ contain points. On the other hand, since $r \leq \frac{1}{2\sqrt{k}}$, in $\dg$ there is no edge between two points in different sub-squares of $S$. Thus, by choosing one point from each sub-square in $S$ we obtain an independent set of size $k$. 
\end{proof}

\begin{corollary}
\label{is-cor1}
Let $k > \frac{n}{4 \ln n}$ be an integer. If $r > \sqrt{\frac{c\ln \ln n}{k}}$, for some constant $c>0$, then $\dg$ has no independent set of size $k$ with high probability.
\end{corollary}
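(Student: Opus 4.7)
The plan is to derive this corollary as an immediate consequence of Theorem~\ref{is-thr1}. The key observation is that under the hypothesis $k > n/(4\ln n)$, the quantity $\ln(en/k)$ appearing in the threshold of Theorem~\ref{is-thr1} is bounded above by a constant multiple of $\ln\ln n$, so the cleaner bound $\sqrt{c\ln\ln n/k}$ stated in the corollary is strong enough to activate that theorem.

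Concretely, I would first verify that the hypothesis $k \geq \ln n$ required by Theorem~\ref{is-thr1} is automatic. Since $n/(4\ln n)$ grows faster than $\ln n$, we have $k > n/(4\ln n) > \ln n$ for all sufficiently large $n$, so Theorem~\ref{is-thr1} is applicable. Next, from $k > n/(4\ln n)$ I would estimate
$$\ln\!\left(\frac{en}{k}\right) < \ln(4e\ln n) = \ln(4e) + \ln\ln n,$$
so there is an absolute constant $c_1 > 0$ with $\ln(en/k) \leq c_1 \ln \ln n$ for all large $n$. Letting $c_0$ denote the constant supplied by Theorem~\ref{is-thr1} (the proof there requires $c_0 > 16/\pi$), I would then choose $c \geq c_0 c_1$, so that the hypothesis of the corollary gives
$$r > \sqrt{\frac{c\ln \ln n}{k}} \;\geq\; \sqrt{\frac{c_0 c_1 \ln\ln n}{k}} \;\geq\; \sqrt{\frac{c_0 \ln(en/k)}{k}}.$$
Theorem~\ref{is-thr1} then yields immediately that $\dg$ has no independent set of size $k$ with high probability.

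The argument is essentially routine and I do not anticipate a genuine obstacle. The only care required is tracking the two constants through the reduction, so that the final constant $c$ in the statement of the corollary is chosen large enough to absorb both the factor $c_0$ coming from Theorem~\ref{is-thr1} and the proportionality $c_1$ between $\ln(en/k)$ and $\ln\ln n$ under the regime $k > n/(4\ln n)$.
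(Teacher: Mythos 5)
Your proposal is correct and follows essentially the same route as the paper: both deduce the corollary from Theorem~\ref{is-thr1} by observing that $k > n/(4\ln n)$ forces $\ln(en/k) \le 1 + \ln 4 + \ln\ln n$, so the stated bound on $r$ is strong enough to invoke that theorem with a suitably enlarged constant. Your additional check that $k \geq \ln n$ holds automatically in this regime is a harmless extra verification that the paper leaves implicit.
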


\begin{proof}
Since $k > \frac{n}{4\ln n}$, we have $\ln(\frac{en}{k})\le 1+\ln 4 + \ln \ln n$. Thus, the statement follows by Theorem~\ref{is-thr1}. 
\end{proof}

\section{Concluding Remarks}
We presented thresholds for a random geometric graph, $\dg$, to have a connected subgraph or a clique of constant size, to be plane, and to be planar. We also investigated the existence of free edges and independent sets in $\dg$. Following are natural extensions of the problems discussed in this paper that are worth pursuing:
\begin{enumerate}
\item Since any monotone property of $\dg$ has a sharp threshold function \cite{friedgut1996every, goel2004sharp}, it would be interesting to provide such functions for the above properties. 
	
 \item Having a connected subgraph or a clique of size $k$ are monotone properties. For any integer constant $k\ge2$, we proved that $n^{\frac{-k}{2k-2}}$ is a threshold function for having a connected subgraph or a clique of size $k$. A natural problem is to extend these results for the case where $k$ is not necessarily a constant.
  \item  Having an independent set of size $k$ is a monotone property in $\dg$. We provided lower and upper bounds on $r$ for the existence of an independent set of size $k$. A natural problem is to improve any of the provided bounds.

  \item The existence of a free edge in $\dg$ is not a monotone property. We proved that if $r=o\left(\frac{1}{n}\right)$ or $r=\Omega\left(\frac{(\ln n)^{3/4}}{n^{1/4}}\right)$, then w.h.p. $\dg$ has no free edge. Moreover, if $r=\omega\left(\frac{1}{n}\right)$ and $r=o\left(\frac{1}{n^{1/2+\epsilon}}\right)$ for any constant $\epsilon>0$, then w.h.p. $\dg$ has a free edge. The threshold behavior for the existence of a free edge in $\dg$ when $r$ belongs to the interval $\left[\frac{1}{n^{1/2+\epsilon}},\frac{(\ln n)^{3/4}}{n^{1/4}}\right]$ remains as an open problem.
\item We provided examples of unit disk graphs that do not have any free edge. An interesting question is to determine if every planar unit disk graph has at least one free edge.
 
\end{enumerate}

\bibliographystyle{abbrv}
\bibliography{biblio}

\end{document}